\newtheorem{theorem}{Theorem}
\newtheorem{proposition}{Proposition}
\newtheorem{proof}{Proof}
\begin{document}

%\title{A Sample Article Using IEEEtran.cls\\ for IEEE Journals and Transactions}
\title{Acoustic Model Optimization over Multiple Data Sources: Merging and Valuation}

%\author{
%        Di Jiang,~%\IEEEmembership{Member,~IEEE,}
%        Conghui Tan, 
%        Rongzhong Lian,
%        Junqiu Wei
        %~\IEEEmembership{Member,~IEEE,}~%\IEEEmembership{Fellow,~OSA,}
%        }
%,~\IEEEmembership{Life~Fellow,~IEEE}% <-this % stops a space
%\IEEEcompsocitemizethanks{
%\IEEEcompsocthanksitem Di Jiang, Conghui Tan and Rongzhong Lian are with the AI Group, WeBank Co., Ltd, Shenzhen, China.
%\IEEEcompsocthanksitem Junqiu Wei is with The Hong Kong University of Science and Technology.
%\IEEEcompsocthanksitem Junqiu Wei is the corresponding author.%\protect\\
%\protect\\}
%}

\author{Victor Junqiu Wei, Weicheng Wang \\
Department of Computer Science and Engineering\\
The Hong Kong University of Science and Technology\\
\texttt{wjqjsnj@gmail.com, wwangby@connect.ust.hk} \\
\And
Di Jiang, Conghui Tan, Rongzhong Lian\\
AI Group, WeBank Co., Ltd, Shenzhen, China\\
\texttt{\{dijiang,martintan,ronlian\}@webank.com}
}
%\author{IEEE Publication Technology,~\IEEEmembership{Staff,~IEEE,}
        % <-this % stops a space
%\thanks{This paper was produced by the IEEE Publication Technology Group. They are in Piscataway, NJ.}% <-this % stops a space
%\thanks{Manuscript received April 19, 2021; revised August 16, 2021.}}

% The paper headers
%\markboth{Journal of \LaTeX\ Class Files,~Vol.~14, No.~8, August~2021}%
%{Shell \MakeLowercase{\textit{et al.}}: A Sample Article Using IEEEtran.cls for IEEE Journals}

%\IEEEpubid{0000--0000/00\$00.00~\copyright~2021 IEEE}
% Remember, if you use this you must call \IEEEpubidadjcol in the second
% column for its text to clear the IEEEpubid mark.

\maketitle

\begin{abstract}
Due to the rising awareness of privacy protection and the voluminous scale of speech data, it is becoming infeasible for Automatic Speech Recognition (ASR) system developers to train the acoustic model with complete data as before. For example, the data may be owned by different curators, and it is not allowed to share with others. In this paper, we propose a novel paradigm to solve salient problems plaguing the ASR field. In the first stage, multiple acoustic models are trained based upon different subsets of the complete speech data, while in the second phase, two novel algorithms are utilized to generate a high-quality acoustic model based upon those trained on data subsets. We first propose the Genetic Merge Algorithm (GMA), which is a highly specialized algorithm for optimizing acoustic models but suffers from low efficiency. We further propose the SGD-Based Optimizational Merge Algorithm (SOMA), which effectively alleviates the efficiency bottleneck of GMA and maintains superior model accuracy. Extensive experiments on public data show that the proposed methods can significantly outperform the state-of-the-art. Furthermore, we introduce Shapley Value to estimate the contribution score of the trained models, which is useful for evaluating the effectiveness of the data and providing fair incentives to their curators.
\end{abstract}

%\begin{keywords}
%Automatic Speech Recognition, Federated Learning, Genetic Algorithm.
%\end{keywords}

\section{Introduction}

\let\thefootnote\relax\footnotetext{This paper is an extended version of \cite{tan2020novo}.}

{A}{utomatic} Speech Recognition (ASR) has already become an indispensable part of modern intelligence systems such as voice assistants and client service robots. 
An effective ASR system relies on a robust acoustic model that is trained over a huge amount of speech data collected from a wide range of domains. However, in real-life scenarios, training an acoustic model with complete data is increasingly infeasible due to the following three reasons:
\begin{itemize}
    \item \textbf{R1:} Speech data from different domains are owned by distinct curators, who are usually unwilling to share these data due to privacy concerns.
    Meanwhile, to encourage curators to contribute to the training process, providing fair incentives for their participation is also challenging.
    
    \item \textbf{R2:} Speech data of multiple curators may be distributed across different computing centers. Traditional distributed computing paradigms, e.g., ParameterServer, require frequent information exchange between different computing nodes. %and are hardly feasible 
    Thus, if the data are remotely distributed, the information exchange is problematic.  
    \item \textbf{R3:} The speech data from a single curator can be voluminous. Traditional optimizing methods like Stochastic Gradient Descent (SGD) are slow, even for their parallel or asynchronous versions. Instead, one would like to process data in a fully parallel way.
\end{itemize}

To tackle the issues mentioned above, we propose to optimize the acoustic model in a two-stage fashion: (1) we first train models on different parts of the data independently until they converge on local data; (2) after local training, we gather all separately trained models and merge them into a single one. By this means, only one round of communication is required to train the model.  
%There are several existing studies that focus on these issues mentioned above \cite{mcmahan2016communication, povey2011kaldi, povey2014parallel}. They attempted to optimize the acoustic model by a two-stage technique: (1) they first train models on different parts of the data independently until they converge on local data; (2) after local training, they gather all separately trained models and merge them into a single one. By this means, only one round of communication is required to train the model.
This technique has already been applied in some existing applications \cite{mcmahan2016communication, povey2011kaldi, povey2014parallel}. 
For example, \cite{mcmahan2016communication} proposed to use this technique to solve the privacy and communication problems (i.e., \textbf{R1} and \textbf{R2}) in federated learning. 
Kaldi \cite{povey2011kaldi}, one of the most widely-used speech recognition toolkits, has adopted it as the default training scheme to deal with the efficiency challenge (i.e., \textbf{R3}) \cite{povey2014parallel}. 

However, in terms of merging models, most of the existing work still relies on the simplistic technique of averaging over all the models. 
Although some progress has been made in improving performance, a better model merging strategy is still an open problem. %On the other hand, 
Besides, since the training data are collected from multiple curators, it is important to fairly measure the contributions of different data, e.g., to offer the curators corresponding rewards. %But it remains challenging and under-exploited how to evaluate participants' contributions.
Nevertheless, evaluating participants' contributions is currently under-exploited. \looseness=-1

In this paper, we propose a new optimization paradigm. Unlike existing studies, we merge the models in a more data-efficient way, which improves the model quality with a limited amount of data.
Moreover, the Shapley Value is introduced to evaluate the contribution of the data provided by each curator to the target model so that curators can achieve fair incentive rewards. 
In detail, we first propose a merge method based on the genetic algorithm, named GMA, which is capable of yielding models of great recognition accuracy. 
However, its practicality is heavily limited by its poor efficiency. 
To further tackle this issue, we convert the model merging problem into a mathematical optimization problem via a novel formulation and develop a new optimization method based on SGD to solve it, which leads to a new method called SOMA. 

This paper differs from the preliminary conference version~\cite{tan2020novo} in the following ways. 
First of all, in order to evaluate the contribution of different data curators and offer fair rewards
to them, we have proposed a novel method concerning the valuation of the source models which is present in a newly added section (Section~\ref{sec:valuation}). 
We also conducted experiments on the valuation of the source models in Section~\ref{sec:exp-value}. 
Secondly, an extra group of experiments has been conducted, where we empirically evaluated the co-contribution scores and studied their relationships to other statistical quantities. 
Thirdly, the related work section (Section~\ref{sec:related}) has been expanded. 
We reviewed the related literature more comprehensively, and the works related to incentive mechanisms are also included. 
Fourthly, more details have been added to some existing technical parts to make them easier to read.
%and the descriptions more precise. 

%\IEEEpubidadjcol

Experiments suggest SOMA can produce models comparable to GMA, but with much less computational cost. 
Without loss of generality, we focus on the scenario of merging several DNN acoustic models with homogeneous structures into a single one. 
This practice can be straightforwardly applied in the state-of-the-art ASR systems based upon DNN-HMM or End-to-End architectures.

\section{Related Work}
\label{sec:related}
\subsection{Automatic Speech Recognition}
Automatic speech recognition (ASR) is a technology for smoother interaction between people and machines. Its applications include mobile speech recognition technology, smart speakers, and other smart voice products. The ASR converts speech fragments into text, which typically includes an acoustic model (AM) and a language model (LM). Recently, there has been some research on end-to-end ASR that does not build AM and LM explicitly. However, traditional ASR systems are still dominating in practice due to their high performance and reliability.

The acoustic model aims to find the probability distribution on a phoneme sequence given a piece of speech. Before the rise of deep learning, practical acoustic models were implemented with statistical models such as the classical Gaussian mixture model (GMM) and the Hidden Markov model (HMM). GMM and HMM have a critical position due to their mathematical elegance and capability to model time-varying sequences \cite{gales2008application}. However, they fail to capture complex sequential information and non-linear speech features.
With the rise of deep learning, deep learning-based acoustic models, such as CNN, LSTM, encoder-decoder frameworks, and attention mechanisms, have further improved performance, becoming dominating in the field of speech recognition.

Subsequent researchers found discriminative training can get relatively better results \cite{waibel1989phoneme}\cite{jiang2023probabilistic}\cite{song2021l2rs}. Some research proposed better feature extraction models to replace GMM, including neural network (NN), restricted Boltzmann machine (RBM), deep belief network (DBN), and deep neural network (DNN) \cite{purwins2019deep}. The outstanding performance of the hybrid model has also attracted much attention.

Deep learning methods enhance the speech signal representation compared to traditional acoustic models such as GMM. Inspired by the GMM-HMM structure, researchers propose to replace GMM with DNN, resulting in a popular DNN-HMM hybrid system, which utilizes the advantages of modeling sequential information of HMM and the superior ability to capture speech representation. In 2012, DNN trained on very large-scale data successfully reduced the word error rate (WER)\cite{hinton2012deep}, stressing the paramount potential of DNN's ability to learn the hierarchical structure of the representation from the input data. Further advanced deep models, such as the recurrent neural network (RNN) (including long short-term memory (LSTM) and gated recurrent unit (GRU)) and convolutional neural network (CNN), quickly surpassed DNN. They enhance the ability to capture the rich structural information from speech \cite{latif2018phonocardiographic,chen2023neural,hong2024infantcrynet,song2022platform,wu2023enhance,chen2021scalable,jiang2021gdpr}. The lack of annotated data promotes unsupervised representation learning research. For the unsupervised representation learning of speech, the autoencoder (AE), the restricted Boltzmann machine (RBM), and the deep belief network (DBN) are widely used \cite{langkvist2014review}.

With the recent interest in generative models, variational auto-encoder (VAE), generative adversarial network (GAN), and deep autoregressive models are also used in AM \cite{van2016pixel,deepmind2016generative}. Among them, VAE and GAN can achieve unsupervised acquisition of speech features \cite{bollepalli2019generative,hsu2017unsupervised}. Combined with the characteristics of the data in practical applications, the feature capture capabilities of deep learning under different settings have helped speech recognition. 

The adaptation of the acoustic model is also of our interest in this paper. The existing method of AM adaptation focus on addressing the feature shift (covariate shift) of speech, such as speaker, environmental noise, pitch, loudness, etc. \cite{ghorbani2019domain}. While in federated learning scenarios, we aim to merge the AM from curators with different distribution to form a stronger AM that fits all participants. 

\subsubsection{Language model}

In the ASR system, the language model (LM) is another critical component that facilitates converting the output of AM to a logical and natural sentence. Language models have achieved significant performance improvements with natural language processing development,  leading to better ASR performance. 

The primary language model is the bag-of-words model (BoW), a one-hot form of text representation. This method is suitable for processing discrete data and extending features but does not consider the order between words\cite{baeza2015predicting}. Although this type of method is simple, the semantic representation is not exact. In ASR, the most commonly used language model is $N$-gram, which models the next-word probability by counting the co-occurrence. Simple as $N$-gram is, its robustness and interpretability make it a worthy choice as the LM in ASR systems.

In contrast to statistical language models, Begio et al. proposed the concept of neural network language models in 2003 \cite{bengio2003neural}. The performance of the language model has been effectively improved with the emergence of the word vector model. The most representative ones are Word2Vec \cite{mikolov2013efficient}, and Glove \cite{pennington2014glove}, whose expression is to convert each word into a vector representation with richer semantic information. As contextual words are very informative, researchers \cite{wang2018densely} propose to apply CNN instead of $N$-gram to capture context from a larger receptive field. Meanwhile, the success of RNNLM shows the significant importance of modeling long-range sequential context.

On the other side, pre-training models have now become the main idea of language models~\cite{jiang2023probabilistic, lu2024pretraining,wu2022phonetic,zhou2018multi,devlin2018bert,radford2019language,lan2019albert,liu2019roberta,wei2021training,wei2019nezha,li2021heterogeneous,jiang2021industrial,zhou2021memetic,jiang2019federated, song2020topicocean,hong2024expanding,jiang2016latent}. ELMo \cite{peters2018deep} uses LSTM structure and two-way settings to fully consider contextual information and reflect the characteristics of different dimensions. The pre-trained model can cope with different downstream tasks and has a better overall performance. Transformer \cite{vaswani2017attention} (and its variants~\cite{zhang2021continuous,zhang2020tensorcoder,wang2022clusterformer,li2022hypoformer}), a powerful attention-based architecture, has been proved of its overwhelming ability in modeling deep structural information from data \cite{radford2018improving,zhou2018multi}. Transformer is soon applied to large-scale pre-trained language models, including GPT \cite{radford2018improving}, BERT \cite{devlin2018bert}, GPT2 \cite{radford2019language} and subsequent ALBERT \cite{lan2019albert}, RoBERTa \cite{liu2019roberta}, etc. However, it may be inefficient to apply pre-trained language models such as BERT and GPT to LM in ASR \cite{DBLP:journals/corr/abs-1904-09408}. 
Therefore, it is an open problem on how to utilize and merge LM in federated ASR.

\subsubsection{End-to-end ASR model}

Some researchers consider ASR as a sequence recognition task. Inspired by other sequence modeling tasks such as dialogue generation and machine translation, it is reasonable that ASR performance can be improved by replacing the module-based approach with an end-to-end manner, which avoids error accumulation between modules. The end-to-end ASR model is elegant and straightforward without introducing noises from AM and LM, which consist of an encoder-decoder structure based on attention mechanism \cite{sutskever2014sequence,bahdanau2014neural}.

However, end-to-end ASR is still challenging in real applications. For example, errors in previous time-steps may propagate to subsequent decisions, especially when the input speech is long. The mismatch between training and evaluation objects usually brings suboptimal performance \cite{wu2016google}. Tjandra et al. \cite{tjandra2018sequence} propose to integrate sequence-to-sequence approaches with reinforcement learning to tackle the gap between training object and evaluation utility. Li et al. \cite{li2019jasper} introduce a new layer-wise optimizer for the end-to-end speech recognition model, which enhances training convergence. Pham et al. \cite{pham2019very} propose use self-attention via the Transformer architecture to replace the traditional end-to-end model and shows better performance than previous approaches. However, as end-to-end ASR systems consume many computational resources, it is not suitable to deploy on edge devices and in a federated learning scenario.

\subsection{Acoustic Model Optimization}

Existing optimization methods of training deep neural networks can be divided into three categories: first-order optimization methods represented by the widely used stochastic gradient methods (SGD), high-order optimization methods such as Newton's method, and heuristic optimization methods such as the coordinate descent method \cite{sun2019survey}.

Among them, SGD \cite{robbins1951stochastic,jain2018parallelizing} are widely used for training acoustic model \cite{yu2016automatic}. With the characteristic that each iteration is independent of the total amount of data, SGD has achieved excellent performance in the training of sparsely distributed data \cite{konevcny2016federated}. Furthermore, the variants of SGD such as Adam \cite{kingma2014adam} get a lot of attention and appreciation. Stochastic averaging gradient (SAG) and stochastic variance-reduced gradient (SVRG) based on gradient descent are proposed to solve the problem of the slow convergence speed. Experiments in \cite{reddi2016stochastic,babanezhad2015stop,johnson2013accelerating} suggest that SVRG matches basic SGD, but it needs to process the whole dataset in a number of iterations. Researchers propose a scheme for applying SVRG in a distributed setting and observe its better performance. Nevertheless, it is difficult to implement it due to the operations on a large amount of data \cite{lee2015distributed,lee2017distributed}. 

Researches in the heuristic optimization method also have made progress. Talbi et al. propose to use variation operators (e.g., mutation, crossover) to select and reproduce DNNs on their representations \cite{talbi2020optimization}. Cui et al. proposed to use a combinational optimization strategy with SGD algorithm and evolutionary learning \cite{cui2019acoustic}. As for the distributed computation setting, asynchronous SGD under the ParameterServer framework is widely adopted \cite{zhang2013asynchronous}.

Moreover, Taming et al. \cite{de2015taming} propose an asynchronous SGD (A-SGD) algorithm using lower-precision arithmetic, avoiding a variety of problems on modern hardware while maintaining good performance SGD. A-SGD is further applied to industrial applications composed of large deep neural networks, improving those large distributed machine learning clusters' performance \cite{dean2012large,chilimbi2014project}. However, the limitation of the communication conditions between the participants and the central server prevents the federated system from achieving the best performance with conventional optimization methods. Some previous works help relieve the aforementioned limitations \cite{kumar2019static,meinedo2000combination,xiong2018microsoft}. Povey et al., \cite{povey2014parallel} proposed a method of training acoustic models on subsets of data independently and then merging them periodically. 

Model combination in previous works \cite{kumar2019static,meinedo2000combination,xiong2018microsoft} does not merge the models into a single one in the training phase, where all the models are respectively evaluated, and their outputs are combined to produce the final results during inference. Hence, they are more related to ensemble learning \cite{zhou2012ensemble}. Model ensembling increases the burden of prediction, such as computational cost and communication rounds, in the federated learning scenario. 

\subsection{Incentive Mechanism in Federated Scenario}

Researchers have proposed many incentive mechanisms deliberately designed for FL, which reward and react to these participants with different quality based on contribution evaluation to encourage more participants to contribute their high-quality data to the FL process. Researchers leverage the idea from game theory to address this problem \cite{huang2020exploratory, weng2020fedserving, nishio2020estimation, hasan2021incentive}.

Some researchers apply Stackelberg games-based incentive mechanism to FL since the parameter server and participants in FL can be regarded as leaders and followers in Stackelberg games. Stackelberg games are used to analyze the federated scenario and to find the benefit equilibria between the organizer and the participants \cite{pandey2019incentivize, hu2020trading}. Sarikaya et al. \cite{sarikaya2019motivating} proposed to mitigate the delays in completion of each training batch by analytically obtaining an equilibrium solution of a Stackelberg game. Khan et al. \cite{khan2020federated} model the incentive-based interaction between a global server and participants for FL via a Stackelberg game to motivate the participants to join in the FL process. However, since the participants in FL do not upload data but upload parameters instead,  the utility functions are invisible to the server. Thus, Stackelberg games only work to incentivize network resources' contribution but can not calculate each participant's contribution to the model effect.

Some researchers use incentive mechanisms inspired by Contract Theory to motivate participants to join in FL \cite{kang2019incentive, ding2020incentive}. Kang et al. \cite{kang2019incentive2} accomplish the pairwise contribution qualification by introducing reputation and let participants evaluate each other, and combine contract theory to motivate participants with high-quality data. A hierarchical incentive mechanism design for FL is proposed in \cite{lim2020hierarchical}, which considers multiple model owners and the formation of multiple federations. It analyzes the equilibrium that the model reaches after iterations of merges and splits, applying solutions from Stackelberg games. For a fair distribution of payoffs, they adopt a coalitional game approach based on each model owner's marginal contribution to the federation. Zhan et al. \cite{zhan2020learning} design a deep reinforcement learning-based (DRL) incentive mechanism for FL to motivate edge nodes to contribute model training. VCG-based FL incentive mechanisms are designed for incentivizing data owners to contribute all their data and truthfully report their costs in FL settings \cite{cong2020vcg, cong2020game}. Zeng et al. \cite{zeng2020fmore} and Le et al. \cite{le2020incentive} formulate the incentive mechanism between the participants of federated learning as an auction. Yu et al. propose the FL incentivizer (FLI), which dynamically divides a given budget in a context-aware manner among data owners in a federation by jointly maximizing the collective utility while minimizing the inequality among the data owners, in terms of the payoff received and the waiting time for receiving payoffs \cite{yu2020sustainable, yu2020fairness}. Recently, researchers introduce Shapley Value or its approximation to evaluate the contributions of participants in FL \cite{wang2019measure, song2019profit}.

\section{Problem Setup}

Assume that we have $n$ acoustic models $\{M_{S,1}, M_{S,2}, \dots, \\ M_{S,n}\}$ with the same structures but different parameters since they are trained on different data. We call these $n$ acoustic models as the \emph{source models}. Our goal is to merge them into one \emph{target model} $M_T$, which possesses the same structure as source models but has better performance.

For each acoustic model $M_i$, we assume it has $L$ DNN layers, and its parameter of $l$-th layer is denoted as $W_i^l$ ($1\leq l\leq L$). $W_i^l$ includes all types of trainable parameters on that layer, such as weight and bias. For notational simplicity, we use the operation on model $M_i$ to denote the operation on all its parameters $W_i^l$ with $l=1,\dots,L$. For example, $(M_i+M_j)/2$ is the model generated by averaging all the corresponding parameters of $M_i$ and $M_j$.

In order to select the best $M_T$, we need some data to evaluate the quality of $M_T$, and we refer those data as the validation data. In contrast, the data used for training $\{M_{S,1},M_{S,2},\dots,M_{S,n}\}$ are called as the training data. Though extra validation data is required for our scheme, it will be shown later that the model quality can be greatly improved with very little validation data, which implies it is possible to obtain acoustic models of similar performance with less training data. To measure the performance of acoustic models, we utilize the widely used metric Word Error Rate (WER), which is defined as the minimum edit distance between the ASR hypothesis and the ground truth over the number of words in the ground truth.

\section{Genetic Merge Algorithm}

Since we aim at discovering a better target acoustic model from a set of source models, a relatively straightforward approach is to consider the source models as the initial population and apply the genetic algorithm \cite{holland1992adaptation}. Genetic algorithms are a class of heuristic search algorithms inspired by biological evolution. It optimizes a group of candidates by repeatedly generating new individuals via operations like mutation and crossover and then offers the ones with large fitness the right to produce offsprings, just like how biological evolution works. A genetic algorithm has two important factors that determine its performance: the scheme of generating offsprings and the strategy of selecting the fittest individuals. In the following, we propose the Genetic Merge Algorithm (GMA), which is calibrated for the scenario of acoustic model optimization in ASR systems.

In GMA, the scheme for generating offspring includes four different operators. The first three are classical in GA: reproduction, mutation and crossover. Specifically, we choose to use single point mutation operator and the one-point crossover operator respectively. Moreover, inspired by the phenomenon discovered in \cite{mcmahan2016communication} that directly averaging two neural network models with the same initialization but trained on different data can lead to a better one, we propose a new operator called linear interpolation operator. In detail, these four operators work as follows:
\begin{itemize}
\item \emph{Reproduction} directly copies the existing models into next generation.
\item \emph{Mutation} randomly changes one bit in the binary expression of the parameters for the selected model.
\item \emph{Crossover} takes two parent models as the input. It randomly draws an integer $l$ ($1\leq l<L$), and the first $l$ layers of these two models are swapped. For example, if $M_1$ and $M_2$ are two input parent models, then two generated offsprings are:
    \begin{align*}
        M_{\text{new},1} =& \{W_1^1,\dots,W_1^{l}, W_2^{l+1}, \dots, W_2^L\}, \\
        M_{\text{new},2} =& \{W_2^1,\dots,W_2^{l}, W_1^{l+1}, \dots, W_1^L\}.
    \end{align*}
\item \emph{Linear interpolation} linearly combines all the parameters of two parent models in a weighted way to generate one new model, i.e.,
    \[
        M_{\text{new}} = \lambda M_1 + (1 - \lambda) M_2,
    \]
    where $\lambda$ is an interpolation coefficient randomly sampled from $(0,1)$. Obviously, this operator is an extension of simple average.
\end{itemize}
For the normalization layers like batch normalization, we view the statistical variables on these layers (such as the batch mean and variance in batch normalization) as parameters, and apply genetic operators to these statistical variables.

We utilize WER as the measurement of the fitness and lower WER implies better fitness. For each generation, we evaluate the WER of each acoustic model on the validation set and choose the top-$K$ with the lowest WERs as the parents of the next generation, where $K$ is a hyper-parameter. Furthermore, inspired by the fact that the simple average of all the source models is already a good choice for $M_T$ again, we also include the averaged model $\sum_{i=1}^n M_{S,i}/n$ into the initial population. Such initialization provides a better start for GMA and reduces the time needed for converging. Besides, it ensures that the final acoustic model generated by GMA is always better than the simple average. 

The complete workflow of GMA is presented in Algorithm~\ref{alg:ga}. Besides $K$, GMA requires three additional hyperparameters: $p_1$, $p_2$ and $p_3$, which are the probabilities that mutation, crossover and linear interpolation operators are applied to generate offsprings respectively. It is worth noting that larger $p_1,p_2,p_3$ and $K$ bring more diversity into the population and usually lead to better searching results, but they also increase computation costs.

\begin{algorithm}[h]
    \SetKwInOut{Input}{input}
    \SetKwInOut{Output}{output}
    \SetAlgoLined
    \Input{source models $M_{S,1},M_{S,2},\dots,M_{S,n}$}
        Initialize $P=\{M_{S,1},\dots,M_{S,n}, \sum_{i=1}^n M_{S,i}/n\}$ \\
        \While{not converged}{
        $P'=P$ \tcp*[f]{Reproduction} \\
            \ForEach{$M_i$ in $P$}{
            With probability $p_1$ let $P'=P'\cup \text{Mutation}(M_i)$
            }
            Randomly shuffle $P$ \\
            \ForEach{adjacent models $M_i, M_{i+1}$ in $P$}{
            With probability $p_2$ let $P'=P'\cup \text{Crossover}(M_i,M_{i+1})$ \\
            With probability $p_3$ let $P'=P'\cup \text{LinearInterpolation}(M_i,M_{i+1})$
            }
        Compute the WERs of the models in $P'$ on validation set\\
        Let $P$ be the set of top-$K$ models in $P'$
        }
        \Output{$M_T=$model in $P$ with lowest WER}
        \caption{Genetic Merge Algorithm (GMA)}
        \label{alg:ga}
\end{algorithm}

\section{SGD-Based Optimizational Merge Algorithm}

With its superior performance in generating high-quality acoustic models, GMA suffers from extremely low efficiency. Processing a few acoustic models with GMA on a small validation data (e.g., 5 source models on the validation set containing 10 hours of speech data) already requires several days, making it hardly applicable real-life applications.

To tackle this issue, we propose the SGD-Based Optimizational Merge Algorithm (SOMA), which enjoys similar performance as GMA but much more efficient. The major challenge of developing this method is how to convert our acoustic model merging problem into a mathematical optimization problem that SGD can be applied.

The key observation is that any model $M$ generated by GMA can be layer-wisely presented by the following formula:
\begin{equation}
\label{eq:def_mt}
\begin{aligned}
W^l=&\sum_{i=1}^n \theta_i^l W_{S,i}^l + \Delta W^l \\
\text{s.t.\ \ }& \theta_i^l\geq 0,\ \sum_{i=1}^n\theta_i^l=1
\end{aligned}
\end{equation}
for all its layers $W^l$ ($1\leq l\leq L$). Here the summation term $\sum_{i=1}^n \theta_i^l W_{S,i}^l$ corresponds to the linear interpolation and crossover operators, while the extra variable $\Delta W^l$ catches the change introduced by mutation. This fact is rigorously justified by the following proposition:

\begin{proposition}
    $M$ in the form of \eqref{eq:def_mt} covers any model generated by GMA. Besides, term $\Delta W^l$ is brought by the mutation operation. In other words, it always holds that $\Delta W^l=0$ if the mutation operator is not applied.
\end{proposition}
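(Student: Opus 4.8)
The plan is to proceed by structural induction on the sequence of genetic operations that produces a given model, showing that the representation~\eqref{eq:def_mt} together with the simplex constraints $\theta_i^l \geq 0$ and $\sum_{i=1}^n \theta_i^l = 1$ is an invariant preserved by every operator of GMA. For the base case, the initial population consists of the source models and their simple average. A source model $M_{S,i}$ is captured by setting $\theta_i^l = 1$, $\theta_j^l = 0$ for $j \neq i$, and $\Delta W^l = 0$ at every layer, while the average model $\sum_{i=1}^n M_{S,i}/n$ is captured by $\theta_i^l = 1/n$ for all $i$ and $\Delta W^l = 0$. In both cases the constraints hold and $\Delta W^l = 0$.

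For the inductive step, I would assume every model currently in the population admits a representation of the claimed form and examine each of the four operators layer by layer. Reproduction leaves a model unchanged, so its representation is inherited verbatim. Crossover only rearranges whole layers between two parents, so each layer of an offspring equals one of the parents' layers, which is already in the required form with coefficients summing to one. The decisive case is linear interpolation: writing the two parents' layers as $W_1^l = \sum_i \alpha_i^l W_{S,i}^l + \Delta W_1^l$ and $W_2^l = \sum_i \beta_i^l W_{S,i}^l + \Delta W_2^l$, the offspring layer becomes
\begin{equation*}
\lambda W_1^l + (1-\lambda) W_2^l = \sum_{i=1}^n \bigl(\lambda \alpha_i^l + (1-\lambda)\beta_i^l\bigr) W_{S,i}^l + \bigl(\lambda \Delta W_1^l + (1-\lambda)\Delta W_2^l\bigr),
\end{equation*}
so the new coefficients are a convex combination of two points of the simplex and hence remain nonnegative and sum to one---this is exactly where $\lambda \in (0,1)$ is used. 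Finally, mutation perturbs a single coordinate of a single layer by some value; keeping the coefficients $\theta_i^l$ fixed and absorbing this perturbation into $\Delta W^l$ keeps the model in the claimed form and leaves the constraints untouched.

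The second assertion then follows by tracking the $\Delta W^l$ term through the same induction. Since $\Delta W^l$ is completely unconstrained, it serves as the slack term that records every mutation; the point is that every other operator preserves the value $\Delta W^l = 0$. Indeed, reproduction copies it, crossover inherits it from a parent, and linear interpolation forms the combination $\lambda \Delta W_1^l + (1-\lambda)\Delta W_2^l$, all of which return zero whenever their inputs are zero. Hence if no mutation occurs in the lineage producing a model, its $\Delta W^l$ remains zero at every layer, which is the claim.

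The computations here are routine; the only real care needed is the bookkeeping of the simplex constraints, and I expect the main conceptual point to be recognizing that $\Delta W^l$ must be left entirely free precisely so that the bit-level change of the mutation operator---which need not respect any affine structure over the source weights---can always be absorbed without disturbing the convex-combination part. Everything else reduces to verifying that the affine representation is closed under each of the four operators.
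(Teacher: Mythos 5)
Your proposal is correct and follows essentially the same route as the paper's own proof: induction over generations, with the base case covering the source models and their average, the convex-combination computation for linear interpolation, layer-wise inheritance for crossover, absorption of mutation into $\Delta W^l$, and the observation that all operators except mutation preserve $\Delta W^l = 0$. No substantive differences to report.
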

\begin{proof}
    We prove this proposition by induction.

    For any source model $M_{S,i}$, it is obvious that \eqref{eq:def_mt} can recover it by setting $\Delta W^l=0$, $\theta_i^l=1$ and all other $\theta_j^l=0$ with $j\neq i$. For the simple average $\sum_{i=1}^n M_{S,i}/n$, we can choose $\theta_i^l=1/n$ for all $i$ and also $\Delta W^l=0$ for all layers. Hence, all the models in the initial generation of GMA can be presented by \eqref{eq:def_mt} with $\Delta W^l=0$.

    Assume all the models in one generation satisfies equation \eqref{eq:def_mt}, then we proceed to show the proposition also holds for the next generation:
    
    It is trivial for the model generated by reproduction, since it is exactly the same as its parent.
    
    For the model generated by the mutation operator, it can be formulated into \eqref{eq:def_mt} by changing the variable $\Delta W^l$ of its parent while inheriting the other terms. 
    
    Since the crossover operator just swaps some layers of two models, and the internal structures of each layer is still preserved, while the constraints in \eqref{eq:def_mt} are imposed layer-wisely, the resulting models should still also stick to the pattern of \eqref{eq:def_mt} once their parents do, but with the corresponding $\theta_i^l$ and $\Delta W^l$ swapped. 
    
    As for the linear interpolation operator, assume $M_1$ and $M_2$ are two input parents, then their layers can be written as:
    \begin{align*}
        W_1^l=\sum_{i=1}^n\alpha_i^l W_{S,i}^l + \Delta W^l_1, \\
        W_2^l=\sum_{i=1}^n\beta_i^l W_{S,i}^l + \Delta W^l_2,
    \end{align*}
    where $\alpha_i^l$ and $\beta_i^l$ are two groups of realization for $\theta_i^l$ and satisfy the constraints for $\theta_i^l$ in \eqref{eq:def_mt}. Thus, the parameter of the linearly interpolated model should be:
    \begin{align*}
        W_{\text{new}}^l
        =& \lambda W_1^l + (1-\lambda) W_2^l \\
        =& \lambda\big(\sum_{i=1}^n\alpha_i^l W_{S,i}^l + \Delta W^l_1\big)
            + (1-\lambda)\big(\sum_{i=1}^n\beta_i^l W_{S,i}^l + \Delta W^l_2\big) \\
        =& \sum_{i=1}^n \underbrace{\big( \lambda \alpha_i^l + (1-\lambda)\beta_i^l \big)}_{\theta^l_{\text{new},i}} W^l_{S,i}
            + \underbrace{\lambda\Delta W^l_1 + (1-\lambda)\Delta W^l_2}_{\Delta W^l_{\text{new}}}.
    \end{align*}
    In the next, we need to show $\theta_{\text{new},i}^l$ satisfies the constraints in \eqref{eq:def_mt}. The non-negativity of $\theta_i^l$ is obvious, considering that $\alpha_i^l\geq 0$, $\beta_i^l\geq 0$ and $\theta\in(0,1)$. Besides,
    \begin{align*}
         \sum_{i=1}^n\theta_{\text{new},i}^l
        =& \sum_{i=1}^n \big( \lambda \alpha_i^l + (1-\lambda)\beta_i^l \big) \\
            =& \lambda \sum_{i=1}^n \alpha_i^l + (1-\lambda)\sum_{i=1}^n \beta_i^l \\
            =& \lambda + (1-\lambda) \\
            =& 1,
    \end{align*}
    where the facts $\sum_{i=1}^n \alpha_i^l=1$ and $\sum_{i=1}^n \beta_i^l=1$ are applied to the second equality. Hence, the model generated by linear interpolation also satisfies the desired pattern.
    
    Now we can conclude that all the offsprings produced by any operator still follow the pattern in \eqref{eq:def_mt}, which completes the mathematical induction.
    
    Finally, from the above argument, it can be seen that $\Delta W^l$ of the generated model of both crossover and linear interpolation operators must be 0 once the corresponding terms of their parents are 0. Along with the fact that $\Delta W^l=0$ for all the models in the initial generation, we can conclude that $\Delta W^l$ is introduced by mutation if it is non-zero.
\end{proof}

Now, we have already defined the pattern how the target model should follow. Then, we can formulate our optimization problem as:
\begin{equation}
\label{eq:problem_1}
\begin{array}{cl}
\min_{W^l,\theta_i^l,\Delta W^l} &\ell(M) \\ \\
\text{s.t. } & W^l=\sum_{i=1}^n \theta_i^l W_{S,i}^l + \Delta W^l \\ \\
    & \theta_i^l\geq 0,\ \sum_{i=1}^n\theta_i^l=1,
\end{array}
\end{equation}
where $M$ is the model consisting of parameters $\{W^1,\dots,W^L\}$, and $\ell(M)$ is the loss function of model $M$ on the validation data. Any common training criterion for DNN-based acoustic model can be used as the loss function here, such as maximum mutual information (MMI, \cite{bahl1986maximum}).

Finally, due to the different nature between SGD and genetic algorithms, it is much easier for SGD to overfit the validation data when solving \eqref{eq:problem_1}. This is because $\Delta W^l$ can be arbitrary in our current formulation, and it is possible that $\Delta W^l$ becomes large enough to dominate the other terms. To avoid such a problem, we impose an extra restriction on $\Delta W^l$ that its magnitude can not exceed that of all parameters $W^l$ up to a constant factor $\rho\geq 0$, e.g., $\rho=0.01$. As a result, our formal formulation for model merging turns into:
\begin{equation}
\label{eq:problem_2}
\begin{array}{cl}
\min_{W^l,\theta_i^l,\Delta W^l} &\ell(M) \\\\
\text{s.t. } & W^l=\sum_{i=1}^n \theta_i^l W_{S,i}^l + \Delta W^l \\\\
    & \theta_i^l\geq 0,\ \sum_{i=1}^n\theta_i^l=1 \\\\
    & \|\Delta W^l\|\leq \rho\|W^l\|.
\end{array}
\end{equation}

It is possible to solve this problem by using other optimization methods than SGD. But we choose SGD because of two reasons: 1) it is well known that SGD is much faster than traditional optimization methods; 2) our goal is to optimize a DNN, whose objective function is non-convex and has many local minimums, while it is already proven that SGD has stronger ability to escape local minimums \cite{kleinberg2018alternative} and usually converge to better solution in practice.

\subsection{Solving Optimization Problem}

Though we have already formulated our problem into an optimization problem, it is still unclear how to solve it, because this problem seems complicated by having many constraints. In this subsection, we develop a new approach to solve it.

To solve \eqref{eq:problem_2}, one of our basic strategies is that we will not directly update the variable $W^l$ by SGD. Instead, we only update $\theta_i^l$ and $\Delta W^l$, while the value of $W^l$ is inferred from these two groups of variables according to the constraint
\[
W^l=\sum_{i=1}^n \theta_i^l W^l_{S,i} + \Delta W^l
\]
at the beginning of each iteration. And the latest estimation of $W^l$ will work as the bridge for updating $\theta_i^l$ and $\Delta W^l$.

To do updates for $\theta_i^l$ and $\Delta W^l$, we first need to compute the gradients of them. According to the chain rule, the gradient of each $\theta_i^l$ can be derived as:
\[
\frac{\partial \ell}{\partial \theta_i^l}
=\frac{\partial \ell}{\partial W^l}\cdot \frac{\partial W^l}{\partial \theta_i^l}
=\frac{\partial \ell}{\partial W^l}\cdot W^l_{S,i},
\]
where $\cdot$ standards for the dot product of matrices, i.e., $A\cdot B=\sum_i\sum_j A_{ij}B_{ij}$. Therefore, we just need to compute the stochastic gradient of the current model $M=\{W^1,\dots,W^L\}$ by back-propagation as normal, then the computation of the derivative of $\theta_i^l$ becomes trivial from the above equation.
As for $\Delta W^l$, by chain rule again, we can show its gradient is exactly the same as the gradient of $W^l$.

After conducting one step of SGD, we further need to do projection operations to ensure the other two constraints are still satisfied. The constraint
\begin{equation}
    \label{eq:projection_1}
    \theta_i^l\geq 0\quad\text{and}\quad \sum_{i=1}^n\theta_i^l=1,
\end{equation}
is the so-called simplex constraint, which is well-studied in optimization literature, and efficient methods for dealing with it are already known \cite{chen2011projection,condat2016fast}. Hence, we will not dive into the detail of how this projection should be done. Instead, we directly use the projection method proposed in \cite{condat2016fast}, and denote the resulting projection operator as $\Pi(\cdot)$.
While for the constraint
\begin{equation}
    \label{eq:projection_2}
    \left\|\Delta W^l\right\|\leq \rho\left\|W^l\right\|,
\end{equation}
we actually just need to scale $\Delta W^l$ to make it smaller, once we find $\Delta W^l$ violates this constraint. Besides, it is trivial to show that the optimal scaling factor should be:
\[
\gamma_l=\frac{\rho \|W^l\|}{\|\Delta W^l\|}.
\]

Now we have completed all building blocks of our algorithm. The full algorithm is summarized in Algorithm~\ref{alg:sgd}. Here $\eta>0$ is the step size (a.k.a., learning rate) hyperparameter for SGD. In the initialization stage, we just set $\theta_i^l=1/n$ and $\Delta W^l=0$, which implies we choose the model obtained by simple average as the initial model. The motivation for this is the same as including the averaged model in the initial population of GMA: it provides faster convergence and guarantees to produce a better model than the simple average baseline.

\begin{algorithm}[h]
\SetKwInOut{Input}{input}
\SetKwInOut{Output}{output}
\SetAlgoLined
\Input{source models $M_{S,1},M_{S,2},\dots,M_{S,n}$}
    Initialize $\theta_i^l=1/n$ and $\Delta W^l=0$ for all $i$ and $l$ \\
    \While{not converged}{
    \tcp{The following operations are done for all $i$ and $l$}
        Let $W^l=\sum_{i=1}^n \theta_i^l W_{S,i}^l + \Delta W^l$ \\
        Draw a batch of samples from validation data, and compute $\frac{\partial \ell}{\partial W^l}$ by back-propagation \\
        Update by SGD:
        \begin{gather*}
            \theta_i^l = \theta_i^l - \eta \frac{\partial \ell}{\partial W^l}\cdot W_{S,i} \\
            \Delta W^l = \Delta W^l - \eta \frac{\partial \ell}{\partial W^l}
        \end{gather*} \\
        Let $\theta^l=\Pi(\theta^l)$   \tcp*[f]{$\Pi(\cdot)$ is the projection operator for \eqref{eq:projection_1}} \\
        \If{\eqref{eq:projection_2} not hold}{
        Let $\Delta W^l=\Delta W^l\cdot \frac{\rho\|W^l\|}{\|\Delta W^l\|}$
        }
    }
    \Output{$M_T$ with parameters $W^l_T=\sum_{i=1}^n \theta_i^l W_{S,i}^l + \Delta W^l$}
    \caption{SGD-Based Optimizational Merging Algorithm (SOMA)}
    \label{alg:sgd}
\end{algorithm}

GMA is slow because it works in a guess-and-trail fashion: it perturbs models fully random, and adopts performance tests to determine whether the perturbation can bring improvement or not.  However, when dealing with DNN, which has complicated and delicate structures, most of the perturbations are unhelpful. Thus it wastes most of the time in doing useless experiments. As a comparison, SOMA optimizes the model in a guided way: it always moves the model towards the negative gradient direction, which can improve the model at every iteration with high probability.

{
\section{Discussion on Privacy Issues}
Privacy protection is a crucial issue in cooperative training over multiple parties. It is desired by each data curator to guarantee that the other participants cannot know its private training data, neither via direct access nor by indirect inference. Though we do not have a special design for privacy protection in our frame, our methods, including GMA and SOMA, have innate advantages to avoid privacy leakage during cooperative training:
\begin{itemize}
    \item First, it is obvious that our framework need not to share private data, which avoids the direct exposure of privacy.
    \item Our methods aggregate the models in an unpredictable and complicated way. Many privacy attacking methods (e.g., \cite{bagdasaryan2020backdoor,melis2019exploiting}) assume the central server aggregates the clients' information in a simple way such as federated averaging and SGD, which just average the clients' model or gradients for updates. On the contrary, our merging strategies assign dynamic weights to different models and different layers when combining local models, or even mutate the parameters of the merged model. And all these operations are invisible to all the participants except the central server. This nonlinear and blox-box transformation is disastrous for many privacy attackers.
    \item Our methods only require one round of training and merging. To the best of our knowledge, all the existing privacy attacking methods for cooperative training either need to watch the models' changes over different rounds \cite{zhu2019deep,melis2019exploiting,wang2019beyond}, or inject certain backdoor onto the local model to upload and observe other participants' reaction in the following rounds \cite{melis2019exploiting,hitaj2017deep}. While our one-round framework eliminates all these possibilities, making privacy attacks via these means impossible, even for the central server.
\end{itemize}

Of course, there exist other types of privacy attacks, e.g., it is possible to ``reverse engineering'' the final model to recover some information of the training data \cite{ateniese2015hacking}. To reduce this risk, we can include extra protection schemes such as differential privacy (DP, \cite{dwork2014algorithmic}) onto the local training stage. But as a work focusing on cooperative training, we will not discuss the risks out of the training process in depth.
}

\section{Source Model Valuation Based on the Shapley Value}
\label{sec:valuation}

Models' performance on the test set is a widely used criterion for evaluating its quality.  However, for source models used in the merging paradigm, their test set performances cannot represent their contribution to the entire training process. For example, one model may be trained on a data set similar to another data we already have. Even though this model may have low WER, it is not helpful for improving our merged model. Therefore, a fairer way to evaluate a source model is to estimate how much excess benefit it can bring to the generated target model. In our solution, we adopt a cooperative game theory concept, Shapley value \cite{winter2002shapley}, to evaluate each source model for model merging.

We assume that $I = \{M_{S,1},\dots,M_{S,n}\}$ are the entire set of source models that participate in the merge process. The marginal contribution $\phi_i$ of model $M_{S,i}$ is described as the difference between the performance of the target model generated with models $M_{S,i}$ and without $M_{S,i}$:
\begin{equation}
\phi_i = U(S \cup \{M_{S,i}\})- U(S)
\end{equation}
where  $S \subseteq I \backslash \{M_{S,i}\} $, and $U(\cdot)$ represents the utility function. In our setting, $U(S)$ is defined as $1-\text{WER}$ of the model merged from $S$.

Therefore, the Shapley Value of model $M_{S,i}$ is defined as the average marginal contribution $\phi_i$ under all possible subsets $S$ consist of other models in $I$:
\begin{equation}
s_i = \sum_{S \subseteq I \backslash \{M_{S,i}\}}  \frac{1}{n\tbinom{|S|}{N-1}} [ U(S \cup \{M_{S,i}\})- U(S)]
\end{equation}

The complexity of the traditional SV calculation method is exponential and requires repeating a large number of experiments. We follow the method from Jia et al. \cite{jia2019towards} who adopt group testing to estimate the Shapley Value of each source model, significantly reducing the time required for valuation.

In the group testing approach, we first conduct a set of tests. In each test, we randomly select a part of the source model $S \subset I = \{M_{S,1},\dots,M_{S,n}\}$, merge them into a target model using SOMA, and calculate the utility function $U(S)$. Next, we calculate the difference between each of the two source models' utility functions $\Delta U_{ij}$. Finally, due to the group rationality of Shapley value, we can get $\hat{s}$, the estimated Shapley Value, by solving a feasibility problem. The specific steps are shown in Algorithm \ref{alg:sv}.
\begin{algorithm}[h]
\SetKwInOut{Input}{input}
\SetKwInOut{Output}{output}
\SetAlgoLined
\Input{source models $I = \{M_{S,1},\dots,M_{S,n}\}$, the number of tests $T$ }
\textbf{define :} $U(S)=1-\mathrm{WER}(\mathrm{SOMA}(S))$ for any $S\subset I$\\
    Let $Z = 2 \sum_{k=1}^{n-1} \frac{1}{k}$ and $q_k=\frac{1}{Z}\big(\frac{1}{k}+\frac{1}{n-k}\big)$ for $k=1,\dots,n-1$ \\
    Initialize $\beta_{ti} = 0$, for $t=1,\dots,T,i=1,\dots,n$ \\
    \For{$t= 1\ to\ T$}{
        Random sample $l_t$ from $\{1,\dots,n-1\}$ with probability $P(l_t=k)=q_k$ \\
        Sample a subset of source models: $S \subset I $ and $|S|=l_t$ \\
        Let $\beta_{ti} = 1$ for all $M_{S,i} \in S $ \\
        Compute $u_t = U(S)$
    }
    $\Delta U_{ij} = \frac{Z}{T} \sum_{t=1}^T u_t (\beta_{ti}-\beta_{tj}) $ for $i=1,\dots,n,j=1,\dots,n\ $ and $\ j \geq i$ \\
    Find $\hat{s}$ by solving the feasibility problem: \\ $\quad\sum_{i=1}^n \hat{s}_i = U(I)$,\ $|(\hat{s}_i-\hat{s}_j)-\Delta U_{ij}| \leq \frac{\epsilon}{2\sqrt{n}}$,\ $\forall i,j\in \{i,\dots,n\}$  \\
    \Output{the Shapley Value estimation $\hat{s}$ }
    \caption{SV Estimation of source models.}
    \label{alg:sv}
\end{algorithm}

With sufficient number of tests $T$, we can assure $\hat{s}$ is an $(\epsilon,\delta)$-approximation of the true Shapley value $s$, i.e., $P(\|\hat{s} - s\|\leq \epsilon)\geq 1 - \delta$:
\begin{theorem}[Theorem 3 of \cite{jia2019towards}]
Algorithm \ref{alg:sv} returns an $(\epsilon,\delta)$-approximation to the SV if the number of tests $T$ satisfies $T\geq 8\log\frac{n(n-1)}{2\delta}/\left((1 - q_{tot}^2)h\Big(\frac{\epsilon}{Z\sqrt{n}(1-q^2_{tot})}\Big)\right)$, where $q_{tot}=\frac{n-2}{n}q_1+\sum_{k=2}^{n-1}q_k\left[1+\frac{2k(n-k)}{n(n-1)}\right]$ and $h(u)=(1+u)\log(1+u)-u$.
\end{theorem}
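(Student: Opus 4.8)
The plan is to treat this as a mean-estimation-plus-concentration argument, reducing the $\ell_2$ guarantee $\|\hat{s}-s\|\leq\epsilon$ to a family of pairwise statements about the estimators $\Delta U_{ij}$, each of which concentrates around the true Shapley difference $s_i-s_j$. First I would show that the per-test contribution is \emph{unbiased} for that difference: writing $\beta_{ti}=\mathbf{1}[M_{S,i}\in S]$, the single-test random variable is $Z\,u_t(\beta_{ti}-\beta_{tj})$, and I claim $\mathbb{E}[Z\,U(S)(\beta_i-\beta_j)]=s_i-s_j$ under the sampling law $P(|S|=k)=q_k$, $S$ uniform of that size. Consequently $\mathbb{E}[\Delta U_{ij}]=s_i-s_j$, since $\Delta U_{ij}$ is the empirical average of $T$ i.i.d.\ copies.

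The unbiasedness identity is the combinatorial heart of the argument and is exactly what forces the peculiar choice $q_k\propto\frac1k+\frac1{n-k}$ and $Z=2\sum_{k=1}^{n-1}\frac1k$. I would prove it by expanding $\mathbb{E}[Z\,U(S)(\beta_i-\beta_j)]$ over subset sizes $k$ and over subsets $S$ of each size, grouping the terms according to whether $i$ (resp.\ $j$) lies in $S$, and comparing the resulting coefficient of each $U(T)$ against the Shapley-kernel weights $\frac{1}{n}\binom{n-1}{|T|}^{-1}$ appearing in the definition of $s_i$. The factors $\frac1k$ and $\frac1{n-k}$ are precisely the normalizations that convert the uniform-over-size-$k$ sampling probabilities into the harmonic Shapley weights, so the two expansions match coefficient-by-coefficient. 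This is the step I expect to be the main obstacle, as it is a delicate bookkeeping of binomial coefficients rather than a soft argument.

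Next I would establish concentration of each $\Delta U_{ij}$ about its mean. Each summand $Z\,u_t(\beta_{ti}-\beta_{tj})$ is bounded (here $u_t=1-\mathrm{WER}\in[0,1]$ and $\beta_{ti}-\beta_{tj}\in\{-1,0,1\}$), and its second moment is controlled by $\mathbb{E}[(\beta_{ti}-\beta_{tj})^2]=P(\text{exactly one of }i,j\in S)$, which the definition of $q_{tot}$ packages into the bound $\mathrm{Var}\lesssim 1-q_{tot}^2$. Because the variance proxy is available, I would use Bennett's inequality rather than the cruder Hoeffding bound; the Bennett rate function is exactly $h(u)=(1+u)\log(1+u)-u$, which is why it appears in the statement. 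Applying Bennett with deviation $\epsilon/(2\sqrt{n})$ yields a per-pair failure probability, and a union bound over the $\binom{n}{2}=\tfrac{n(n-1)}{2}$ pairs produces the $\log\frac{n(n-1)}{2\delta}$ factor. Solving the resulting inequality for the number of tests gives the stated threshold on $T$, after which $|\Delta U_{ij}-(s_i-s_j)|\leq\frac{\epsilon}{2\sqrt n}$ holds simultaneously for all $i,j$ with probability at least $1-\delta$.

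Finally I would propagate the pairwise accuracy to the $\ell_2$ error. On the good event, the triangle inequality combined with the feasibility constraint $|(\hat{s}_i-\hat{s}_j)-\Delta U_{ij}|\leq\frac{\epsilon}{2\sqrt n}$ gives $|(\hat{s}_i-\hat{s}_j)-(s_i-s_j)|\leq\frac{\epsilon}{\sqrt n}$ for every pair. Setting $e_i=\hat{s}_i-s_i$, the efficiency (group-rationality) constraint $\sum_i\hat{s}_i=U(I)=\sum_i s_i$ forces $\sum_i e_i=0$, so $n\,e_i=\sum_j(e_i-e_j)$ and hence $|e_i|\leq\frac1n\sum_j|e_i-e_j|\leq\frac{\epsilon}{\sqrt n}$ for each $i$. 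Summing squares gives $\|\hat{s}-s\|^2=\sum_i e_i^2\leq n\cdot\frac{\epsilon^2}{n}=\epsilon^2$, i.e.\ $\|\hat{s}-s\|\leq\epsilon$, which is the claimed $(\epsilon,\delta)$-approximation. This last step is elementary once the feasibility problem is known to admit a solution, so the substance of the proof lives entirely in the unbiasedness identity and the Bennett-type variance bound.
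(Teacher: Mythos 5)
The paper does not actually prove this statement: it is imported verbatim as Theorem~3 of \cite{jia2019towards}, so there is no in-paper proof to compare against. Judged against the proof in that source, your reconstruction follows the same route and is correct in its essentials: the combinatorial unbiasedness identity $\mathbb{E}[Z\,U(S)(\beta_i-\beta_j)]=s_i-s_j$ (which is exactly what dictates $q_k\propto\frac1k+\frac1{n-k}$ and $Z=2\sum_{k=1}^{n-1}\frac1k$), Bennett's inequality with a variance proxy of order $1-q_{tot}^2$ (whence the rate function $h(u)=(1+u)\log(1+u)-u$), a union bound over the $\tbinom{n}{2}$ pairs giving the $\log\frac{n(n-1)}{2\delta}$ factor, and the efficiency constraint $\sum_i\hat{s}_i=U(I)$ to convert uniform pairwise accuracy $\epsilon/\sqrt{n}$ into the $\ell_2$ bound $\|\hat{s}-s\|\leq\epsilon$. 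Your final centering argument ($\sum_i e_i=0$ implies $|e_i|\le\frac1n\sum_j|e_i-e_j|$) is exactly the right elementary step, and you correctly note that feasibility of the program is guaranteed on the good event because the true $s$ is itself feasible there.

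One small observation: your reading of $q_{tot}$ as $P(\beta_{ti}=\beta_{tj})$ is the correct one, and it exposes a sign slip in the statement as transcribed in this paper. Conditioning on $|S|=k$, one has $P(\beta_{ti}=\beta_{tj})=1-\frac{2k(n-k)}{n(n-1)}$, so the bracket in the definition of $q_{tot}$ should read $1-\frac{2k(n-k)}{n(n-1)}$ (equivalently $1+\frac{2k(k-n)}{n(n-1)}$, as in the original reference), not $1+\frac{2k(n-k)}{n(n-1)}$. This does not affect the structure of your argument, but it is the form needed for the variance bound $\mathrm{Var}\lesssim 1-q_{tot}^2$ to come out as stated.
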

By simplifying the bound into the asymptotic form, this theorem implies that $T= \mathcal{O}\left( \frac{n}{\epsilon}\big(\log n\big)^2 \log\frac{1}{\epsilon\delta}\right)$ is enough for an $(\epsilon,\delta)$-approximation. This is much more affordable compared to the naive computation of the SV, which needs $\mathcal{O}(2^n)$ tests.

\section{Experiments}

\subsection{Experimental Setup}

In order to ensure the reproducibility of the experiments, we conduct all experiments on public datasets. Specifically, we collected five speech dataset from the OpenSLR\footnote{\url{http://www.openslr.org/resources.php}} website, which are SLR18, SLR33, SLR38, SLR48 and SLR62. All of them contain Chinese speech recordings in wav format with a sampling rate of 16kHz. Each dataset includes training, validation and test sets \footnote{For dataset which does not have splits in advance, we randomly split it into training, validation and test sets with proportions 60\%: 20\%: 20\%}. The detailed statistics of all the datasets are presented in Table~\ref{table:dataset}.

\begin{table*}[h]
\caption{Statistics of the datasets}
\centering
\large
\begin{tabular}{|c|c|c|c|c|c|c|c|}
\hline
\multirow{2}{*}{Dataset} & \multirow{2}{*}{Name} & \multicolumn{2}{c|}{Training} & \multicolumn{2}{c|}{Validation} & \multicolumn{2}{c|}{Test} \\ \cline{3-8}
                    &      & no. wav     & duration/h     & no. wav        & duration/h       & no. wav     & duration/h    \\ \hline
SLR18     & THCHS-30	               & 7984        & 20.4        & 2657           & 6.7           & 2747        & 7.0        \\ \hline
SLR33     & Aishell               & 120418      & 151.2       & 14331          & 18.1          & 7176        & 10.0       \\ \hline
SLR38      & \begin{tabular}[c]{@{}c@{}}
Free ST Chinese\\Mandarin Corpus
\end{tabular}	              & 61698       & 65.9        & 20395          & 21.8          & 20507       & 22.0       \\ \hline
SLR47     & \begin{tabular}[c]{@{}c@{}}
Primewords Chi-\\nese Corpus Set 1
\end{tabular}               & 30366       & 59.6         & 10092          & 19.8          & 10212       & 20.1       \\ \hline
SLR62     & aidatatang\_200zh	               & 164905      & 139.9       & 24216          & 20.2          & 48144       & 40.2       \\ \hline \hline
Sum           &           & 385371      & 437.1       & 71691          & 86.7          & 88786       & 99.3       \\ \hline
\end{tabular}
\label{table:dataset}
\end{table*}

As a testbed, we develop a full-fledged ASR system through the open-source toolkit Kaldi \cite{povey2011kaldi}. Its built-in ``Chain'' model is used as the acoustic model of the ASR system. The DNN component of the ``Chain'' model is implemented by Time Delay Neural Network (TDNN) \cite{waibel1989phoneme} and the other components of the model such as HMM are pre-trained. The backoff $n$-gram model with $n=3$ is used as the language model, which is trained with the SRILM toolkit \cite{stolcke2002srilm}. The whole system is deployed on a machine with CentOS, Intel Xeon CPU of 72 cores, NVIDIA Tesla K80 GPU and 314GB memory. 

Five TDNNs (i.e., the DNN components of the corresponding ``Chain'' models) with the same initialization are trained on the training sets of the five datasets respectively. The five TDNNs play the roles of the source models. Considering the slow speed of GMA, we sample a subset from the collection of the validation data of the five datasets with a proportion 10\% as the validation set. Both model merging methods will work on this set to optimize the target model by default. All the reported WERs in our experiments are computed on the test set unless otherwise specified.

For the hyperparameters in GMA, we set $K=15$, $p_1=0.5$ and $p_2=p_3=0.1$. The parameter $\rho$ in SOMA is chosen to be $0.1$.

\subsection{Effectiveness Evaluation}

\begin{figure}[h]
\centering
\includegraphics[width=\linewidth]{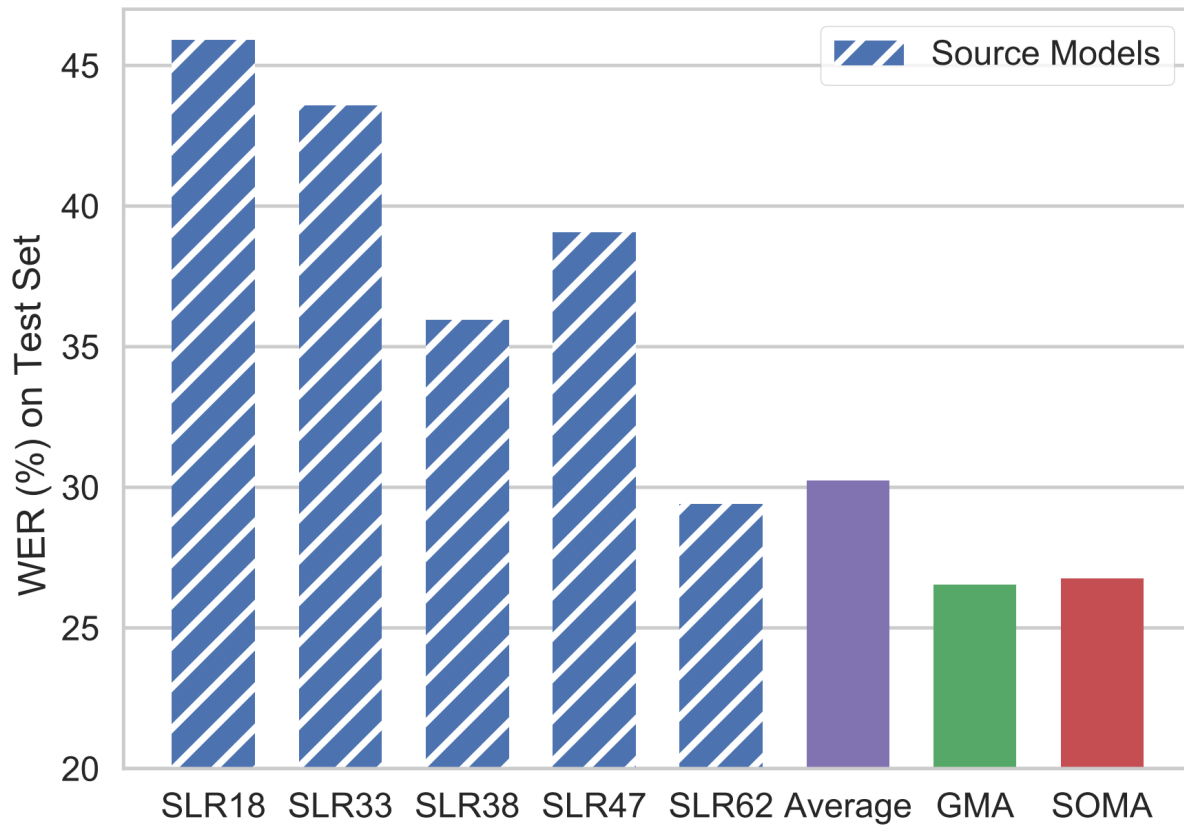}
\caption{WERs of the source models and target models generated by different methods.}
%\Description{WERs of the source models and target models generated by different methods.}
\label{fig:1}
\end{figure}

We first compare the performance of models generated by different methods: direct average, GMA and SOMA. Both GMA and SOMA have been run for enough time until they converge. In detail, SOMA has been run for 10 iterations, where we define one iteration as one passes through the sampled validation data. While for GMA, it has been run for 100 generations with a population size $K=15$, which results in more than 1000 iterations since each generated model needs to be evaluated on the validation set once.

The WERs of all the models, including the source models, are reported in Figure \ref{fig:1}. Due to the different sizes and qualities of the training data, the performances of the source models vary a lot, which brings challenges to model merging. Though direct average achieves a WER lower than most of the source models, it is still slightly worse than the best one. As a comparison, GMA and SOMA obviously outperform all of them. Between them, GMA works better than SOMA, but their difference ($0.2\%$) is quite limited.

\subsection{Efficiency Evaluation}

\begin{figure}[h]
\centering
\includegraphics[width=0.5\textwidth]{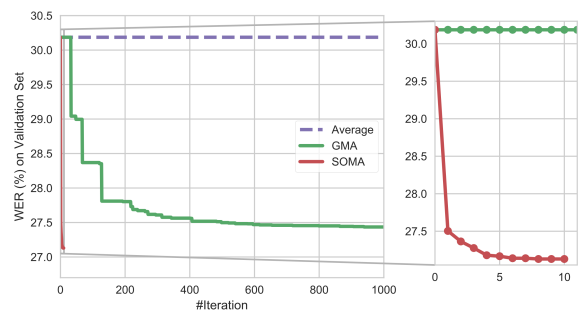}
\caption{Convergence curves of GMA and SOMA.}
%\Description{Convergence curves of GMA and SOMA.}
\label{fig:2}
\end{figure}

\begin{figure}[h]
\centering
\includegraphics[width=\linewidth]{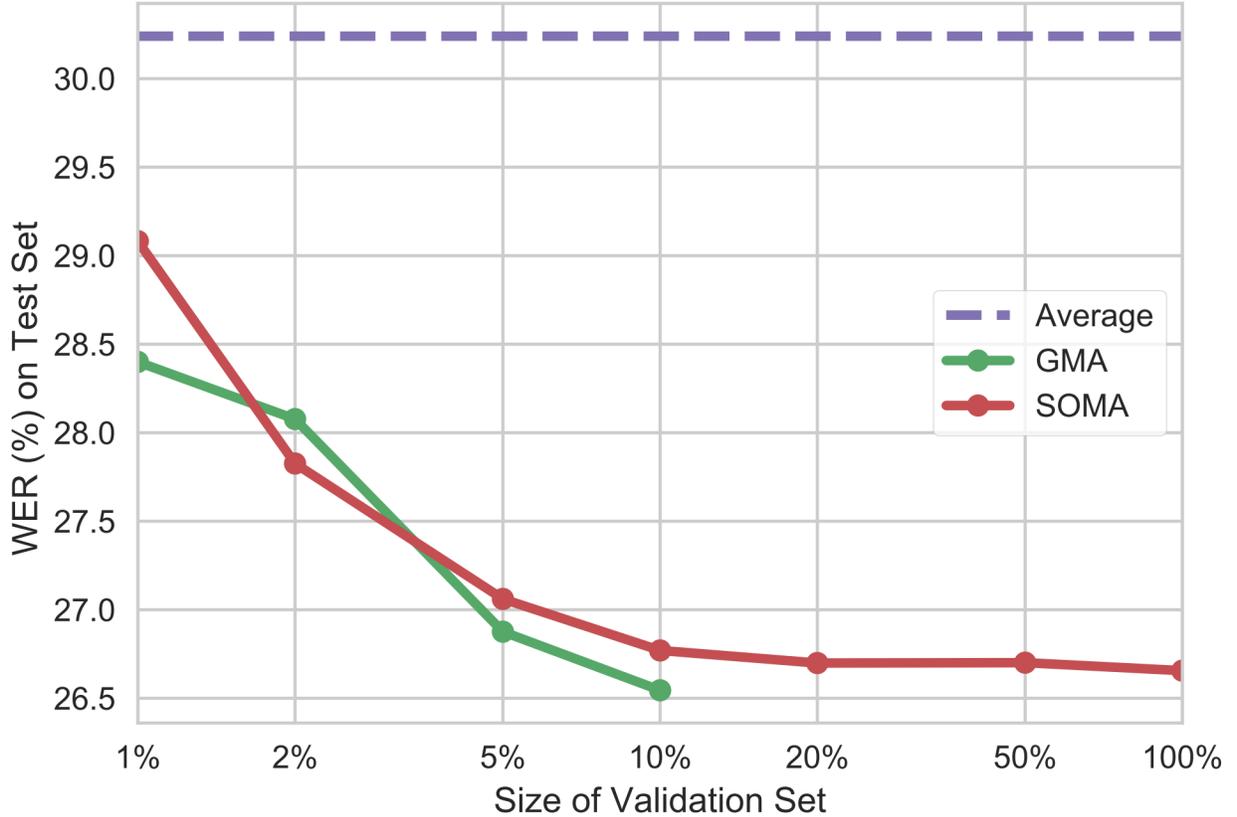}
\caption{WERs of the target models optimized on different sizes of the validation set. Log-scale is adopted for $x$-axis.}
%\Description{WERs of the target models optimized on different sizes of the validation set. Log-scale is adopted for $x$-axis.}
\label{fig:3}
\end{figure}

Though GMA and SOMA have close performance in terms of the generated model quality, they differ substantially in their efficiency. {To demonstrate this, we draw the convergence curves of the first 1000 iterations in Figure \ref{fig:2}, where the WERs of the best-so-far generated models over different iterations are reported. It can be observed that SOMA converges quickly, so that the WER is greatly reduced after just one iteration, and it converges in less than 10 iterations. However, GMA fails to generate a better model than the direct average within the first 30 iterations. It improves the models in a very slow way. Even after 1000 iterations, it still falls behind SOMA. Finally, GMA will become better than SOMA, but it takes a very long optimization time}. Therefore, we can conclude that GMA is impractical on large datasets due to its poor efficiency.

By taking both effectiveness and efficiency into consideration, we recommend using SOMA rather than GMA in practice.

\subsection{Variation of Validation Data Size}

Considering both GMA and SOMA require an extra set of validation data for model merging compared to direct average, in this part we will vary the size of the validation set, and see what will happen to them. We randomly sample subsets of different sizes from the complete validation set with proportions $\{1\%, 2\%, 5\%, 10\%, 20\%, 50\%, 100\%\}$, and run GMA and SOMA on them, and finally evaluate the WERs on the test set. Due to the slow speed of GMA, it is not tested on the validation subsets larger than $10\%$. {The results are reported in Figure \ref{fig:3}.}

Overall, we can observe that larger validation set yields better target models for both GMA and SOMA. However, if the data size is already large enough, say 10\% of the whole validation set, further increasing data volume does not bring too much help for SOMA. Moreover, we can see that both GMA and SOMA can beat the direct average with a very limited number of validation data such as proportion 1\%. Note that 1\% of the validation data only corresponds to approximately 0.2\% of the training data, but can already help reduce test WER for more than one percent compared to the simple average. This fact provides a strong reason for why to use our methods.

 \subsection{Valuation of Source Models}
 \label{sec:exp-value}

\begin{figure}[h]
\centering
\includegraphics[width=\linewidth]{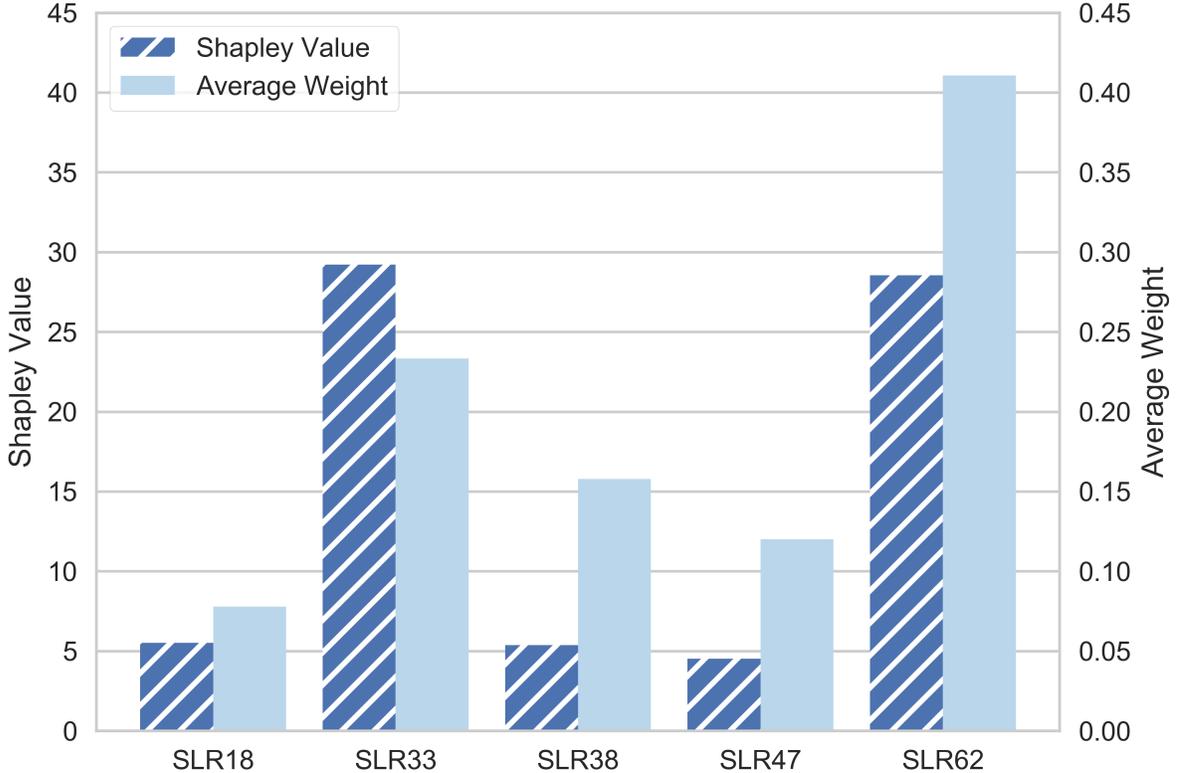}
\caption{Shapley Value and Average Weight of Source Models}
%\Description{Shapley Value and Average Weight of Source Models}
\label{fig:4}
\end{figure}

We conduct experiments with Algorithm \ref{alg:sv} to estimate the Shapley Values (i.e., how much benefit each source model can bring to the target model), which aims to fairly evaluate each source model fine-tuned on different subsets of data. We carried out $T=10$ of group tests to estimate the Shapley Values.

Figure \ref{fig:4} depicts the Shapley Value of each source model. We also report the average weight of each source model $ M_{S,i}$ in the final target model, i.e., $\frac{1}{L} \sum_{l=1}^L \theta_i^l$, which reveals the effect of these source models on the generated target model.

As the results have shown, each source model's Shapley Value is generally positively correlated with its average weight in the target model. However, this is not always the case. The Shapley Value of the source model fine-tuned with SLR33 is a little larger than those fine-tuned with SLR62, but its average weight in the generated target model is smaller.
The benefits that each source model brings to the target model do not entirely depend on its performance on the test set, as is shown in Figure \ref{fig:1}. Although some source models have higher WER, such as the model fine-tuned with SLR33, they can play a more significant role when merging with others.

\section{Conclusion}
In this paper, we propose a novel optimization paradigm for optimizing the acoustic model in ASR in the case where data come from different sources. In our framework, We first train multiple acoustic models independently on distinct parts of data. Then, instead of applying the simplistic averaging scheme for merging acoustic models, we propose two novel algorithms with significantly better performance: GMA and SOMA, where the former is based on a genetic algorithm and the latter one adopts SGD with a novel mathematical formulation. Experiments show that both of them can greatly improve acoustic model quality with very limited amount of validation data. Especially, SOMA demonstrates superior efficiency and can be easily applied to large-scale speech data. Besides, we propose to use the Shapley Value to measure the contribution of different data curators, and empirically study the relationship of Shapley Values to the merging weights.

%{\appendices
%\section*{Proof of the First Zonklar Equation}
%Appendix one text goes here.
% You can choose not to have a title for an appendix if you want by leaving the argument blank
%\section*{Proof of the Second Zonklar Equation}
%Appendix two text goes here.}

 % argument is your BibTeX string definitions and bibliography database(s)
%\bibliography{IEEEabrv,../bib/paper}
%

\bibliographystyle{IEEEtran}
\bibliography{IEEEabrv,reference}

% Generated by IEEEtran.bst, version: 1.14 (2015/08/26)
\begin{thebibliography}{100}
\providecommand{\url}[1]{#1}
\csname url@samestyle\endcsname
\providecommand{\newblock}{\relax}
\providecommand{\bibinfo}[2]{#2}
\providecommand{\BIBentrySTDinterwordspacing}{\spaceskip=0pt\relax}
\providecommand{\BIBentryALTinterwordstretchfactor}{4}
\providecommand{\BIBentryALTinterwordspacing}{\spaceskip=\fontdimen2\font plus
\BIBentryALTinterwordstretchfactor\fontdimen3\font minus \fontdimen4\font\relax}
\providecommand{\BIBforeignlanguage}[2]{{%
\expandafter\ifx\csname l@#1\endcsname\relax
\typeout{** WARNING: IEEEtran.bst: No hyphenation pattern has been}%
\typeout{** loaded for the language `#1'. Using the pattern for}%
\typeout{** the default language instead.}%
\else
\language=\csname l@#1\endcsname
\fi
#2}}
\providecommand{\BIBdecl}{\relax}
\BIBdecl

\bibitem{tan2020novo}
C.~Tan, D.~Jiang, J.~Peng, X.~Wu, Q.~Xu, and Q.~Yang, ``A de novo divide-and-merge paradigm for acoustic model optimization in automatic speech recognition.'' in \emph{IJCAI}, 2020, pp. 3709--3715.

\bibitem{mcmahan2016communication}
H.~B. McMahan, E.~Moore, D.~Ramage, S.~Hampson \emph{et~al.}, ``Communication-efficient learning of deep networks from decentralized data,'' \emph{arXiv preprint arXiv:1602.05629}, 2016.

\bibitem{povey2011kaldi}
D.~Povey, A.~Ghoshal, G.~Boulianne, L.~Burget, O.~Glembek, N.~Goel, M.~Hannemann, P.~Motlicek, Y.~Qian, P.~Schwarz \emph{et~al.}, ``The kaldi speech recognition toolkit,'' in \emph{IEEE 2011 workshop on automatic speech recognition and understanding}, no. CONF.\hskip 1em plus 0.5em minus 0.4em\relax IEEE Signal Processing Society, 2011.

\bibitem{povey2014parallel}
D.~Povey, X.~Zhang, and S.~Khudanpur, ``Parallel training of deep neural networks with natural gradient and parameter averaging,'' \emph{arXiv preprint arXiv:1410.7455}, 2014.

\bibitem{gales2008application}
M.~Gales and S.~Young, ``The application of hidden markov models in speech recognition,'' 2008.

\bibitem{waibel1989phoneme}
A.~Waibel, T.~Hanazawa, G.~Hinton, K.~Shikano, and K.~J. Lang, ``Phoneme recognition using time-delay neural networks,'' \emph{IEEE transactions on acoustics, speech, and signal processing}, vol.~37, no.~3, pp. 328--339, 1989.

\bibitem{jiang2023probabilistic}
D.~Jiang, C.~Zhang, and Y.~Song, \emph{Probabilistic topic models: Foundation and application}.\hskip 1em plus 0.5em minus 0.4em\relax Springer, 2023.

\bibitem{song2021l2rs}
Y.~Song, D.~Jiang, X.~Zhao, Q.~Xu, R.~C.-W. Wong, L.~Fan, and Q.~Yang, ``L2rs: A learning-to-rescore mechanism for hybrid speech recognition,'' in \emph{Proceedings of the 29th ACM International Conference on Multimedia}, 2021, pp. 1157--1166.

\bibitem{purwins2019deep}
H.~Purwins, B.~Li, T.~Virtanen, J.~Schl{\"u}ter, S.-Y. Chang, and T.~Sainath, ``Deep learning for audio signal processing,'' \emph{IEEE Journal of Selected Topics in Signal Processing}, vol.~13, no.~2, pp. 206--219, 2019.

\bibitem{hinton2012deep}
G.~Hinton, L.~Deng, D.~Yu, G.~Dahl, A.-r. Mohamed, N.~Jaitly, A.~Senior, V.~Vanhoucke, P.~Nguyen, B.~Kingsbury \emph{et~al.}, ``Deep neural networks for acoustic modeling in speech recognition,'' \emph{IEEE Signal processing magazine}, vol.~29, 2012.

\bibitem{latif2018phonocardiographic}
S.~Latif, M.~Usman, R.~Rana, and J.~Qadir, ``Phonocardiographic sensing using deep learning for abnormal heartbeat detection,'' \emph{IEEE Sensors Journal}, vol.~18, no.~22, pp. 9393--9400, 2018.

\bibitem{chen2023neural}
Y.~Chen, D.~Jiang, C.~Tan, Y.~Song, C.~Zhang, and L.~Chen, ``Neural moderation of asmr erotica content in social networks,'' \emph{IEEE Transactions on Knowledge and Data Engineering}, vol.~36, no.~1, pp. 275--280, 2023.

\bibitem{hong2024infantcrynet}
M.~Hong, C.~J. Zhang, L.~Yang, Y.~Song, and D.~Jiang, ``Infantcrynet: A data-driven framework for intelligent analysis of infant cries,'' \emph{arXiv preprint arXiv:2409.19689}, 2024.

\bibitem{song2022platform}
Y.~Song, R.~Lian, Y.~Chen, D.~Jiang, X.~Zhao, C.~Tan, Q.~Xu, and R.~C.-W. Wong, ``A platform for deploying the tfe ecosystem of automatic speech recognition,'' in \emph{Proceedings of the 30th ACM International Conference on Multimedia}, 2022, pp. 6952--6954.

\bibitem{wu2023enhance}
X.~Wu, D.~Jiang, Y.~Song, Q.~Xu, and Q.~Yang, ``Enhance mono-modal sentiment classification with federated cross-modal transfer.'' \emph{IEEE Data Eng. Bull.}, vol.~46, no.~1, pp. 158--169, 2023.

\bibitem{chen2021scalable}
C.~Chen, D.~Jiang, J.~Peng, R.~Lian, Y.~Li, C.~Zhang, L.~Chen, and L.~Fan, ``Scalable identity-oriented speech retrieval,'' \emph{IEEE Transactions on Knowledge and Data Engineering}, vol.~35, no.~3, pp. 3261--3265, 2021.

\bibitem{jiang2021gdpr}
D.~Jiang, C.~Tan, J.~Peng, C.~Chen, X.~Wu, W.~Zhao, Y.~Song, Y.~Tong, C.~Liu, Q.~Xu \emph{et~al.}, ``A gdpr-compliant ecosystem for speech recognition with transfer, federated, and evolutionary learning,'' \emph{ACM Transactions on Intelligent Systems and Technology (TIST)}, vol.~12, no.~3, pp. 1--19, 2021.

\bibitem{langkvist2014review}
M.~L{\"a}ngkvist, L.~Karlsson, and A.~Loutfi, ``A review of unsupervised feature learning and deep learning for time-series modeling,'' \emph{Pattern Recognition Letters}, vol.~42, pp. 11--24, 2014.

\bibitem{van2016pixel}
A.~Van~Oord, N.~Kalchbrenner, and K.~Kavukcuoglu, ``Pixel recurrent neural networks,'' in \emph{International Conference on Machine Learning}.\hskip 1em plus 0.5em minus 0.4em\relax PMLR, 2016, pp. 1747--1756.

\bibitem{deepmind2016generative}
W.~DeepMind, ``A generative model for raw audio,'' \emph{Retrieved on Sep}, vol.~12, 2016.

\bibitem{bollepalli2019generative}
B.~Bollepalli, L.~Juvela, and P.~Alku, ``Generative adversarial network-based glottal waveform model for statistical parametric speech synthesis,'' \emph{arXiv preprint arXiv:1903.05955}, 2019.

\bibitem{hsu2017unsupervised}
W.-N. Hsu, Y.~Zhang, and J.~Glass, ``Unsupervised learning of disentangled and interpretable representations from sequential data,'' \emph{arXiv preprint arXiv:1709.07902}, 2017.

\bibitem{ghorbani2019domain}
S.~Ghorbani, S.~Khorram, and J.~H. Hansen, ``Domain expansion in dnn-based acoustic models for robust speech recognition,'' in \emph{2019 IEEE Automatic Speech Recognition and Understanding Workshop (ASRU)}.\hskip 1em plus 0.5em minus 0.4em\relax IEEE, 2019, pp. 107--113.

\bibitem{baeza2015predicting}
R.~Baeza-Yates, D.~Jiang, F.~Silvestri, and B.~Harrison, ``Predicting the next app that you are going to use,'' in \emph{Proceedings of the eighth ACM international conference on web search and data mining}, 2015, pp. 285--294.

\bibitem{bengio2003neural}
Y.~Bengio, R.~Ducharme, P.~Vincent, and C.~Jauvin, ``A neural probabilistic language model,'' \emph{Journal of machine learning research}, vol.~3, no. Feb, pp. 1137--1155, 2003.

\bibitem{mikolov2013efficient}
T.~Mikolov, K.~Chen, G.~Corrado, and J.~Dean, ``Efficient estimation of word representations in vector space,'' \emph{arXiv preprint arXiv:1301.3781}, 2013.

\bibitem{pennington2014glove}
J.~Pennington, R.~Socher, and C.~D. Manning, ``Glove: Global vectors for word representation,'' in \emph{Proceedings of the 2014 conference on empirical methods in natural language processing (EMNLP)}, 2014, pp. 1532--1543.

\bibitem{wang2018densely}
S.~Wang, M.~Huang, and Z.~Deng, ``Densely connected cnn with multi-scale feature attention for text classification.'' in \emph{IJCAI}, 2018, pp. 4468--4474.

\bibitem{lu2024pretraining}
J.~Lu, R.~Lian, D.~Jiang, Y.~Song, Z.~Su, V.~J. Wei, and L.~Yang, ``Pretraining enhanced rnn transducer,'' \emph{CAAI Artificial Intelligence Research}, vol.~3, 2024.

\bibitem{wu2022phonetic}
X.~Wu, R.~Lian, D.~Jiang, Y.~Song, W.~Zhao, Q.~Xu, and Q.~Yang, ``A phonetic-semantic pre-training model for robust speech recognition,'' \emph{CAAI Artificial Intelligence Research}, vol.~1, no.~1, 2022.

\bibitem{zhou2018multi}
X.~Zhou, L.~Li, D.~Dong, Y.~Liu, Y.~Chen, W.~X. Zhao, D.~Yu, and H.~Wu, ``Multi-turn response selection for chatbots with deep attention matching network,'' in \emph{Proceedings of the 56th Annual Meeting of the Association for Computational Linguistics (Volume 1: Long Papers)}, 2018, pp. 1118--1127.

\bibitem{devlin2018bert}
J.~Devlin, M.-W. Chang, K.~Lee, and K.~Toutanova, ``Bert: Pre-training of deep bidirectional transformers for language understanding,'' \emph{arXiv preprint arXiv:1810.04805}, 2018.

\bibitem{radford2019language}
A.~Radford, J.~Wu, R.~Child, D.~Luan, D.~Amodei, and I.~Sutskever, ``Language models are unsupervised multitask learners,'' \emph{OpenAI blog}, vol.~1, no.~8, p.~9, 2019.

\bibitem{lan2019albert}
Z.~Lan, M.~Chen, S.~Goodman, K.~Gimpel, P.~Sharma, and R.~Soricut, ``Albert: A lite bert for self-supervised learning of language representations,'' \emph{arXiv preprint arXiv:1909.11942}, 2019.

\bibitem{liu2019roberta}
Y.~Liu, M.~Ott, N.~Goyal, J.~Du, M.~Joshi, D.~Chen, O.~Levy, M.~Lewis, L.~Zettlemoyer, and V.~Stoyanov, ``Roberta: A robustly optimized bert pretraining approach,'' \emph{arXiv preprint arXiv:1907.11692}, 2019.

\bibitem{wei2021training}
J.~Wei, Q.~Liu, Y.~Guo, and X.~Jiang, ``Training multilingual pre-trained language model with byte-level subwords,'' \emph{arXiv preprint arXiv:2101.09469}, 2021.

\bibitem{wei2019nezha}
J.~Wei, X.~Ren, X.~Li, W.~Huang, Y.~Liao, Y.~Wang, J.~Lin, X.~Jiang, X.~Chen, and Q.~Liu, ``Nezha: Neural contextualized representation for chinese language understanding,'' \emph{arXiv preprint arXiv:1909.00204}, 2019.

\bibitem{li2021heterogeneous}
Y.~Li, D.~Jiang, R.~Lian, X.~Wu, C.~Tan, Y.~Xu, and Z.~Su, ``Heterogeneous latent topic discovery for semantic text mining,'' \emph{IEEE Transactions on Knowledge and Data Engineering}, vol.~35, no.~1, pp. 533--544, 2021.

\bibitem{jiang2021industrial}
D.~Jiang, Y.~Tong, Y.~Song, X.~Wu, W.~Zhao, J.~Peng, R.~Lian, Q.~Xu, and Q.~Yang, ``Industrial federated topic modeling,'' \emph{ACM Transactions on Intelligent Systems and Technology (TIST)}, vol.~12, no.~1, pp. 1--22, 2021.

\bibitem{zhou2021memetic}
X.~Zhou, C.~Tan, D.~Jiang, B.~Zhang, S.~Li, Y.~Xu, Q.~Xu, and S.~Gao, ``Memetic federated learning for biomedical natural language processing,'' in \emph{Natural Language Processing and Chinese Computing: 10th CCF International Conference, NLPCC 2021, Qingdao, China, October 13--17, 2021, Proceedings, Part II 10}.\hskip 1em plus 0.5em minus 0.4em\relax Springer, 2021, pp. 43--55.

\bibitem{jiang2019federated}
D.~Jiang, Y.~Song, Y.~Tong, X.~Wu, W.~Zhao, Q.~Xu, and Q.~Yang, ``Federated topic modeling,'' in \emph{Proceedings of the 28th ACM international conference on information and knowledge management}, 2019, pp. 1071--1080.

\bibitem{song2020topicocean}
Y.~Song, Y.~Tong, S.~Bao, D.~Jiang, H.~Wu, and R.~C.-W. Wong, ``Topicocean: An ever-increasing topic model with meta-learning,'' in \emph{2020 IEEE International Conference on Data Mining (ICDM)}.\hskip 1em plus 0.5em minus 0.4em\relax IEEE, 2020, pp. 1262--1267.

\bibitem{hong2024expanding}
M.~Hong, Y.~Song, D.~Jiang, L.~Wang, Z.~Guo, and C.~J. Zhang, ``Expanding chatbot knowledge in customer service: Context-aware similar question generation using large language models,'' \emph{arXiv preprint arXiv:2410.12444}, 2024.

\bibitem{jiang2016latent}
D.~Jiang, L.~Shi, R.~Lian, and H.~Wu, ``Latent topic embedding,'' in \emph{Proceedings of COLING 2016, the 26th International Conference on Computational Linguistics: Technical Papers}, 2016, pp. 2689--2698.

\bibitem{peters2018deep}
M.~E. Peters, M.~Neumann, M.~Iyyer, M.~Gardner, C.~Clark, K.~Lee, and L.~Zettlemoyer, ``Deep contextualized word representations,'' \emph{arXiv preprint arXiv:1802.05365}, 2018.

\bibitem{vaswani2017attention}
A.~Vaswani, N.~Shazeer, N.~Parmar, J.~Uszkoreit, L.~Jones, A.~N. Gomez, L.~Kaiser, and I.~Polosukhin, ``Attention is all you need,'' \emph{arXiv preprint arXiv:1706.03762}, 2017.

\bibitem{zhang2021continuous}
J.~Zhang, P.~Zhang, B.~Kong, J.~Wei, and X.~Jiang, ``Continuous self-attention models with neural ode networks,'' in \emph{Proceedings of the AAAI Conference on Artificial Intelligence}, vol.~35, no.~16, 2021, pp. 14\,393--14\,401.

\bibitem{zhang2020tensorcoder}
S.~Zhang, P.~Zhang, X.~Ma, J.~Wei, N.~Wang, and Q.~Liu, ``Tensorcoder: Dimension-wise attention via tensor representation for natural language modeling,'' \emph{arXiv preprint arXiv:2008.01547}, 2020.

\bibitem{wang2022clusterformer}
N.~Wang, G.~Gan, P.~Zhang, S.~Zhang, J.~Wei, Q.~Liu, and X.~Jiang, ``Clusterformer: Neural clustering attention for efficient and effective transformer,'' in \emph{Proceedings of the 60th Annual Meeting of the Association for Computational Linguistics (Volume 1: Long Papers)}, 2022, pp. 2390--2402.

\bibitem{li2022hypoformer}
S.~Li, P.~Zhang, G.~Gan, X.~Lv, B.~Wang, J.~Wei, and X.~Jiang, ``Hypoformer: Hybrid decomposition transformer for edge-friendly neural machine translation,'' in \emph{Proceedings of the 2022 Conference on Empirical Methods in Natural Language Processing}, 2022, pp. 7056--7068.

\bibitem{radford2018improving}
A.~Radford, K.~Narasimhan, T.~Salimans, and I.~Sutskever, ``Improving language understanding by generative pre-training,'' 2018.

\bibitem{DBLP:journals/corr/abs-1904-09408}
\BIBentryALTinterwordspacing
C.~Wang, M.~Li, and A.~J. Smola, ``Language models with transformers,'' \emph{CoRR}, vol. abs/1904.09408, 2019. [Online]. Available: \url{http://arxiv.org/abs/1904.09408}
\BIBentrySTDinterwordspacing

\bibitem{sutskever2014sequence}
I.~Sutskever, O.~Vinyals, and Q.~V. Le, ``Sequence to sequence learning with neural networks,'' \emph{arXiv preprint arXiv:1409.3215}, 2014.

\bibitem{bahdanau2014neural}
D.~Bahdanau, K.~Cho, and Y.~Bengio, ``Neural machine translation by jointly learning to align and translate,'' \emph{arXiv preprint arXiv:1409.0473}, 2014.

\bibitem{wu2016google}
Y.~Wu, M.~Schuster, Z.~Chen, Q.~V. Le, M.~Norouzi, W.~Macherey, M.~Krikun, Y.~Cao, Q.~Gao, K.~Macherey \emph{et~al.}, ``Google's neural machine translation system: Bridging the gap between human and machine translation,'' \emph{arXiv preprint arXiv:1609.08144}, 2016.

\bibitem{tjandra2018sequence}
A.~Tjandra, S.~Sakti, and S.~Nakamura, ``Sequence-to-sequence asr optimization via reinforcement learning,'' in \emph{2018 IEEE International Conference on Acoustics, Speech and Signal Processing (ICASSP)}.\hskip 1em plus 0.5em minus 0.4em\relax IEEE, 2018, pp. 5829--5833.

\bibitem{li2019jasper}
J.~Li, V.~Lavrukhin, B.~Ginsburg, R.~Leary, O.~Kuchaiev, J.~M. Cohen, H.~Nguyen, and R.~T. Gadde, ``Jasper: An end-to-end convolutional neural acoustic model,'' \emph{arXiv preprint arXiv:1904.03288}, 2019.

\bibitem{pham2019very}
N.-Q. Pham, T.-S. Nguyen, J.~Niehues, M.~M{\"u}ller, S.~St{\"u}ker, and A.~Waibel, ``Very deep self-attention networks for end-to-end speech recognition,'' \emph{arXiv preprint arXiv:1904.13377}, 2019.

\bibitem{sun2019survey}
S.~Sun, Z.~Cao, H.~Zhu, and J.~Zhao, ``A survey of optimization methods from a machine learning perspective,'' \emph{IEEE transactions on cybernetics}, vol.~50, no.~8, pp. 3668--3681, 2019.

\bibitem{robbins1951stochastic}
H.~Robbins and S.~Monro, ``A stochastic approximation method,'' \emph{The annals of mathematical statistics}, pp. 400--407, 1951.

\bibitem{jain2018parallelizing}
P.~Jain, S.~Kakade, R.~Kidambi, P.~Netrapalli, and A.~Sidford, ``Parallelizing stochastic gradient descent for least squares regression: mini-batching, averaging, and model misspecification,'' \emph{Journal of Machine Learning Research}, vol.~18, 2018.

\bibitem{yu2016automatic}
D.~Yu and L.~Deng, \emph{Automatic Speech Recognition}.\hskip 1em plus 0.5em minus 0.4em\relax Springer, 2016.

\bibitem{konevcny2016federated}
J.~Kone{\v{c}}n{\`y}, H.~B. McMahan, D.~Ramage, and P.~Richt{\'a}rik, ``Federated optimization: Distributed machine learning for on-device intelligence,'' \emph{arXiv preprint arXiv:1610.02527}, 2016.

\bibitem{kingma2014adam}
D.~P. Kingma and J.~Ba, ``Adam: A method for stochastic optimization,'' \emph{arXiv preprint arXiv:1412.6980}, 2014.

\bibitem{reddi2016stochastic}
S.~J. Reddi, A.~Hefny, S.~Sra, B.~Poczos, and A.~Smola, ``Stochastic variance reduction for nonconvex optimization,'' in \emph{International conference on machine learning}.\hskip 1em plus 0.5em minus 0.4em\relax PMLR, 2016, pp. 314--323.

\bibitem{babanezhad2015stop}
R.~Babanezhad, M.~O. Ahmed, A.~Virani, M.~Schmidt, J.~Kone{\v{c}}n{\`y}, and S.~Sallinen, ``Stop wasting my gradients: Practical svrg,'' \emph{arXiv preprint arXiv:1511.01942}, 2015.

\bibitem{johnson2013accelerating}
R.~Johnson and T.~Zhang, ``Accelerating stochastic gradient descent using predictive variance reduction,'' \emph{Advances in neural information processing systems}, vol.~26, pp. 315--323, 2013.

\bibitem{lee2015distributed}
J.~D. Lee, Q.~Lin, T.~Ma, and T.~Yang, ``Distributed stochastic variance reduced gradient methods and a lower bound for communication complexity,'' \emph{arXiv preprint arXiv:1507.07595}, 2015.

\bibitem{lee2017distributed}
------, ``Distributed stochastic variance reduced gradient methods by sampling extra data with replacement,'' \emph{The Journal of Machine Learning Research}, vol.~18, no.~1, pp. 4404--4446, 2017.

\bibitem{talbi2020optimization}
E.-G. Talbi, ``Optimization of deep neural networks: a survey and unified taxonomy,'' 2020.

\bibitem{cui2019acoustic}
X.~Cui and M.~Picheny, ``Acoustic model optimization based on evolutionary stochastic gradient descent with anchors for automatic speech recognition,'' \emph{arXiv preprint arXiv:1907.04882}, 2019.

\bibitem{zhang2013asynchronous}
S.~Zhang, C.~Zhang, Z.~You, R.~Zheng, and B.~Xu, ``Asynchronous stochastic gradient descent for dnn training,'' in \emph{2013 IEEE International Conference on Acoustics, Speech and Signal Processing}.\hskip 1em plus 0.5em minus 0.4em\relax IEEE, 2013, pp. 6660--6663.

\bibitem{de2015taming}
C.~De~Sa, C.~Zhang, K.~Olukotun, and C.~R{\'e}, ``Taming the wild: A unified analysis of hogwild!-style algorithms,'' \emph{Advances in neural information processing systems}, vol.~28, p. 2656, 2015.

\bibitem{dean2012large}
J.~Dean, G.~S. Corrado, R.~Monga, K.~Chen, M.~Devin, Q.~V. Le, M.~Z. Mao, M.~Ranzato, A.~Senior, P.~Tucker \emph{et~al.}, ``Large scale distributed deep networks,'' 2012.

\bibitem{chilimbi2014project}
T.~Chilimbi, Y.~Suzue, J.~Apacible, and K.~Kalyanaraman, ``Project adam: Building an efficient and scalable deep learning training system,'' in \emph{11th $\{$USENIX$\}$ Symposium on Operating Systems Design and Implementation ($\{$OSDI$\}$ 14)}, 2014, pp. 571--582.

\bibitem{kumar2019static}
K.~Kumar and Y.~Gong, ``Static and dynamic state predictions for acoustic model combination,'' in \emph{ICASSP 2019-2019 IEEE International Conference on Acoustics, Speech and Signal Processing (ICASSP)}.\hskip 1em plus 0.5em minus 0.4em\relax IEEE, 2019, pp. 2782--2786.

\bibitem{meinedo2000combination}
H.~Meinedo and J.~P. Neto, ``Combination of acoustic models in continuous speech recognition hybrid systems,'' in \emph{Sixth International Conference on Spoken Language Processing}, 2000.

\bibitem{xiong2018microsoft}
W.~Xiong, L.~Wu, F.~Alleva, J.~Droppo, X.~Huang, and A.~Stolcke, ``The microsoft 2017 conversational speech recognition system,'' in \emph{2018 IEEE international conference on acoustics, speech and signal processing (ICASSP)}.\hskip 1em plus 0.5em minus 0.4em\relax IEEE, 2018, pp. 5934--5938.

\bibitem{zhou2012ensemble}
Z.-H. Zhou, \emph{Ensemble methods: foundations and algorithms}.\hskip 1em plus 0.5em minus 0.4em\relax Chapman and Hall/CRC, 2012.

\bibitem{huang2020exploratory}
J.~Huang, R.~Talbi, Z.~Zhao, S.~Boucchenak, L.~Y. Chen, and S.~Roos, ``An exploratory analysis on users' contributions in federated learning,'' \emph{arXiv preprint arXiv:2011.06830}, 2020.

\bibitem{weng2020fedserving}
J.~Weng, J.~Weng, H.~Huang, C.~Cai, and C.~Wang, ``Fedserving: A federated prediction serving framework based on incentive mechanism,'' \emph{arXiv preprint arXiv:2012.10566}, 2020.

\bibitem{nishio2020estimation}
T.~Nishio, R.~Shinkuma, and N.~B. Mandayam, ``Estimation of individual device contributions for incentivizing federated learning,'' \emph{arXiv preprint arXiv:2009.09371}, 2020.

\bibitem{hasan2021incentive}
C.~Hasan, ``Incentive mechanism design for federated learning: Hedonic game approach,'' \emph{arXiv preprint arXiv:2101.09673}, 2021.

\bibitem{pandey2019incentivize}
S.~R. Pandey, N.~H. Tran, M.~Bennis, Y.~K. Tun, Z.~Han, and C.~S. Hong, ``Incentivize to build: A crowdsourcing framework for federated learning,'' in \emph{2019 IEEE Global Communications Conference (GLOBECOM)}.\hskip 1em plus 0.5em minus 0.4em\relax IEEE, 2019, pp. 1--6.

\bibitem{hu2020trading}
R.~Hu and Y.~Gong, ``Trading data for learning: Incentive mechanism for on-device federated learning,'' \emph{arXiv preprint arXiv:2009.05604}, 2020.

\bibitem{sarikaya2019motivating}
Y.~Sarikaya and O.~Ercetin, ``Motivating workers in federated learning: A stackelberg game perspective,'' \emph{IEEE Networking Letters}, vol.~2, no.~1, pp. 23--27, 2019.

\bibitem{khan2020federated}
L.~U. Khan, S.~R. Pandey, N.~H. Tran, W.~Saad, Z.~Han, M.~N. Nguyen, and C.~S. Hong, ``Federated learning for edge networks: Resource optimization and incentive mechanism,'' \emph{IEEE Communications Magazine}, vol.~58, no.~10, pp. 88--93, 2020.

\bibitem{kang2019incentive}
J.~Kang, Z.~Xiong, D.~Niyato, H.~Yu, Y.-C. Liang, and D.~I. Kim, ``Incentive design for efficient federated learning in mobile networks: A contract theory approach,'' in \emph{2019 IEEE VTS Asia Pacific Wireless Communications Symposium (APWCS)}.\hskip 1em plus 0.5em minus 0.4em\relax IEEE, 2019, pp. 1--5.

\bibitem{ding2020incentive}
N.~Ding, Z.~Fang, and J.~Huang, ``Incentive mechanism design for federated learning with multi-dimensional private information,'' in \emph{2020 18th International Symposium on Modeling and Optimization in Mobile, Ad Hoc, and Wireless Networks (WiOPT)}.\hskip 1em plus 0.5em minus 0.4em\relax IEEE, 2020, pp. 1--8.

\bibitem{kang2019incentive2}
J.~Kang, Z.~Xiong, D.~Niyato, S.~Xie, and J.~Zhang, ``Incentive mechanism for reliable federated learning: A joint optimization approach to combining reputation and contract theory,'' \emph{IEEE Internet of Things Journal}, vol.~6, no.~6, pp. 10\,700--10\,714, 2019.

\bibitem{lim2020hierarchical}
W.~Y.~B. Lim, Z.~Xiong, C.~Miao, D.~Niyato, Q.~Yang, C.~Leung, and H.~V. Poor, ``Hierarchical incentive mechanism design for federated machine learning in mobile networks,'' \emph{IEEE Internet of Things Journal}, vol.~7, no.~10, pp. 9575--9588, 2020.

\bibitem{zhan2020learning}
Y.~Zhan, P.~Li, Z.~Qu, D.~Zeng, and S.~Guo, ``A learning-based incentive mechanism for federated learning,'' \emph{IEEE Internet of Things Journal}, vol.~7, no.~7, pp. 6360--6368, 2020.

\bibitem{cong2020vcg}
M.~Cong, H.~Yu, X.~Weng, J.~Qu, Y.~Liu, and S.~M. Yiu, ``A vcg-based fair incentive mechanism for federated learning,'' \emph{arXiv preprint arXiv:2008.06680}, 2020.

\bibitem{cong2020game}
M.~Cong, H.~Yu, X.~Weng, and S.~M. Yiu, ``A game-theoretic framework for incentive mechanism design in federated learning,'' in \emph{Federated Learning}.\hskip 1em plus 0.5em minus 0.4em\relax Springer, 2020, pp. 205--222.

\bibitem{zeng2020fmore}
R.~Zeng, S.~Zhang, J.~Wang, and X.~Chu, ``Fmore: An incentive scheme of multi-dimensional auction for federated learning in mec,'' \emph{arXiv preprint arXiv:2002.09699}, 2020.

\bibitem{le2020incentive}
T.~H.~T. Le, N.~H. Tran, Y.~K. Tun, M.~N. Nguyen, S.~R. Pandey, Z.~Han, and C.~S. Hong, ``An incentive mechanism for federated learning in wireless cellular network: An auction approach,'' \emph{arXiv preprint arXiv:2009.10269}, 2020.

\bibitem{yu2020sustainable}
H.~Yu, Z.~Liu, Y.~Liu, T.~Chen, M.~Cong, X.~Weng, D.~Niyato, and Q.~Yang, ``A sustainable incentive scheme for federated learning,'' \emph{IEEE Intelligent Systems}, vol.~35, no.~4, pp. 58--69, 2020.

\bibitem{yu2020fairness}
------, ``A fairness-aware incentive scheme for federated learning,'' in \emph{Proceedings of the AAAI/ACM Conference on AI, Ethics, and Society}, 2020, pp. 393--399.

\bibitem{wang2019measure}
G.~Wang, C.~X. Dang, and Z.~Zhou, ``Measure contribution of participants in federated learning,'' in \emph{2019 IEEE International Conference on Big Data (Big Data)}.\hskip 1em plus 0.5em minus 0.4em\relax IEEE, 2019, pp. 2597--2604.

\bibitem{song2019profit}
T.~Song, Y.~Tong, and S.~Wei, ``Profit allocation for federated learning,'' in \emph{2019 IEEE International Conference on Big Data (Big Data)}.\hskip 1em plus 0.5em minus 0.4em\relax IEEE, 2019, pp. 2577--2586.

\bibitem{holland1992adaptation}
J.~H. Holland \emph{et~al.}, \emph{Adaptation in natural and artificial systems: an introductory analysis with applications to biology, control, and artificial intelligence}.\hskip 1em plus 0.5em minus 0.4em\relax MIT press, 1992.

\bibitem{bahl1986maximum}
L.~R. Bahl, P.~F. Brown, P.~V. De~Souza, and R.~L. Mercer, ``Maximum mutual information estimation of hidden markov model parameters for speech recognition,'' in \emph{proc. icassp}, vol.~86, 1986, pp. 49--52.

\bibitem{kleinberg2018alternative}
B.~Kleinberg, Y.~Li, and Y.~Yuan, ``An alternative view: When does sgd escape local minima?'' in \emph{International Conference on Machine Learning}.\hskip 1em plus 0.5em minus 0.4em\relax PMLR, 2018, pp. 2698--2707.

\bibitem{chen2011projection}
Y.~Chen and X.~Ye, ``Projection onto a simplex,'' \emph{arXiv preprint arXiv:1101.6081}, 2011.

\bibitem{condat2016fast}
L.~Condat, ``Fast projection onto the simplex and the simplex and the $\ell_1$ ball,'' \emph{Mathematical Programming}, vol. 158, no. 1-2, pp. 575--585, 2016.

\bibitem{bagdasaryan2020backdoor}
E.~Bagdasaryan, A.~Veit, Y.~Hua, D.~Estrin, and V.~Shmatikov, ``How to backdoor federated learning,'' in \emph{International Conference on Artificial Intelligence and Statistics}.\hskip 1em plus 0.5em minus 0.4em\relax PMLR, 2020, pp. 2938--2948.

\bibitem{melis2019exploiting}
L.~Melis, C.~Song, E.~De~Cristofaro, and V.~Shmatikov, ``Exploiting unintended feature leakage in collaborative learning,'' in \emph{2019 IEEE Symposium on Security and Privacy (SP)}.\hskip 1em plus 0.5em minus 0.4em\relax IEEE, 2019, pp. 691--706.

\bibitem{zhu2019deep}
L.~Zhu, Z.~Liu, and S.~Han, ``Deep leakage from gradients,'' \emph{Advances in Neural Information Processing Systems}, vol.~32, pp. 14\,774--14\,784, 2019.

\bibitem{wang2019beyond}
Z.~Wang, M.~Song, Z.~Zhang, Y.~Song, Q.~Wang, and H.~Qi, ``Beyond inferring class representatives: User-level privacy leakage from federated learning,'' in \emph{IEEE INFOCOM 2019-IEEE Conference on Computer Communications}.\hskip 1em plus 0.5em minus 0.4em\relax IEEE, 2019, pp. 2512--2520.

\bibitem{hitaj2017deep}
B.~Hitaj, G.~Ateniese, and F.~Perez-Cruz, ``Deep models under the gan: information leakage from collaborative deep learning,'' in \emph{Proceedings of the 2017 ACM SIGSAC Conference on Computer and Communications Security}, 2017, pp. 603--618.

\bibitem{ateniese2015hacking}
G.~Ateniese, L.~V. Mancini, A.~Spognardi, A.~Villani, D.~Vitali, and G.~Felici, ``Hacking smart machines with smarter ones: How to extract meaningful data from machine learning classifiers,'' \emph{International Journal of Security and Networks}, vol.~10, no.~3, pp. 137--150, 2015.

\bibitem{dwork2014algorithmic}
C.~Dwork, A.~Roth \emph{et~al.}, ``The algorithmic foundations of differential privacy.'' \emph{Found. Trends Theor. Comput. Sci.}, vol.~9, no. 3-4, pp. 211--407, 2014.

\bibitem{winter2002shapley}
E.~Winter, ``The shapley value,'' \emph{Handbook of game theory with economic applications}, vol.~3, pp. 2025--2054, 2002.

\bibitem{jia2019towards}
R.~Jia, D.~Dao, B.~Wang, F.~A. Hubis, N.~Hynes, N.~M. G{\"u}rel, B.~Li, C.~Zhang, D.~Song, and C.~J. Spanos, ``Towards efficient data valuation based on the shapley value,'' in \emph{The 22nd International Conference on Artificial Intelligence and Statistics}.\hskip 1em plus 0.5em minus 0.4em\relax PMLR, 2019, pp. 1167--1176.

\bibitem{stolcke2002srilm}
A.~Stolcke, ``Srilm-an extensible language modeling toolkit,'' in \emph{Seventh international conference on spoken language processing}, 2002.

\end{thebibliography}

%\fi

\vfill

\end{document}